\newcolumntype{P}[1]{>{\centering\arraybackslash}p{#1}}
\newtheorem{thm}{Theorem}
\numberwithin{thm}{section}
\newtheorem{lem}[thm]{Lemma}
\newtheorem{defn}[thm]{Definition}
\renewcommand{\thesection}{\arabic{section}}
\renewcommand{\thesubsection}{\thesection.\arabic{subsection}}
\renewcommand{\p@subsection}{}
\renewcommand{\p@subsubsection}{}
\newcommand\bea{\begin{eqnarray}}
\newcommand\eea{\end{eqnarray}}
\newcommand\be{\begin{equation}}
\newcommand\ee{\end{equation}}
\newcommand\bes{\begin{subequations}}
\newcommand\ees{\end{subequations}}
\newcommand\bed{\begin{displaymath}}
\newcommand\eed{\end{displaymath}}
\newcommand\beal{\begin{aligned}}
\newcommand\eeal{\end{aligned}}
\newcommand\bew{\begin{widetext}}
\newcommand\eew{\end{widetext}}
\newcommand\beit{\begin{itemize}}
\newcommand\eeit{\end{itemize}}
\def\bea{\begin{array}}
\def\eea{\end{array}}
\newcommand\been{\begin{enumerate}}
\newcommand\eeen{\end{enumerate}}
\newcommand{\ident}[0]{\mathds{1}}
\newcommand{\im}{\operatorname{im}}
\newcommand{\transpose}[0]{\mathsf{T}}
\newcommand{\pluseq}{\mathrel{{+}{=}}}
\begin{document}

\title{Single-shot preparation of hypergraph product codes via dimension jump}

\author{Yifan Hong}
\email{yhong137@umd.edu}
\affiliation{Joint Quantum Institute \& Joint Center for Quantum Information and Computer Science, NIST/University of Maryland, College Park, MD 20742, USA}

\begin{abstract}
Quantum error correction is a fundamental primitive of fault-tolerant quantum computing. But in order for error correction to proceed, one must first prepare the codespace of the underlying error-correcting code. A popular method for encoding quantum low-density parity-check codes is transversal initialization, where one begins in a product state and measures a set of stabilizer generators. In the presence of measurement errors however, this procedure is generically not fault-tolerant, and so one typically needs to repeat the measurements many times, resulting in a deep initialization circuit. We present a protocol that prepares the codespace of constant-rate hypergraph product codes in constant depth with $O(\sqrt{n})$ spatial overhead, and we show that the protocol is robust even in the presence of measurement errors. Our construction is inspired by dimension-jumping in topological codes and leverages two properties that arise from the homological product of codes. We provide some improvements to lower the spatial overhead and discuss applications to fault-tolerant architectures.
\end{abstract}


\maketitle

\section{Introduction}

The reliability of large-scale quantum algorithms hedges on the creation and operation of fault-tolerant logical qubits. At the heart of a fault-tolerant architecture lies a quantum error-correcting code, whose role is to reverse the unwanted effect of noise introduced by the environment.

There has been immense development in fault-tolerant architectures using 2D topological codes, both on the theoretical \cite{Kitaev_2003, Bombin_2006, Dennis_2002, Horsman_2012, Fowler_2012, landahl2014, Brown_2017, zhou2024alg} and experimental fronts \cite{ryananderson2022, Google_2023, Bluvstein_2023, Google_2024}. Equipped with competitive thresholds, fault-tolerant gadgets and local implementation in planar geometries, 2D topological codes are the leading candidates for fault tolerance in near-term platforms. However, geometric locality in two spatial dimensions places fundamental limits on the physical overhead \cite{Bravyi_2009, BPT_2010} and capability \cite{Bravyi_2013_transversal} of these codes. By relaxing geometrical constraints, such as going to higher dimensions, one is able to construct codes that support more complex gadgets such as transversal non-Clifford gates \cite{Bombin_2007_3D,  Kubica_2015} and single-shot decoding \cite{Bombin_2015, Kubica_2022, Stahl_2024}. Single-shot codes are able to suppress errors with just one round of noisy syndrome measurement and are thus attractive from the perspective of decoding complexity and computational clock speeds.

Quantum low-density parity-check (LDPC) codes are a broad class of quantum stabilizer codes \cite{gottesman1997stabilizer} whose connectivity is few-body: every qubit is involved in a constant number of parity checks, and every parity check acts on a constant number of qubits \cite{Breuckmann_2021}. This class encompasses the topological codes mentioned above but also includes families of codes with asymptotically optimal parameters such as constant rate \cite{Freedman_2002, Tillich_2014_HGP} and/or linear distance \cite{Panteleev_2022_LDPC, qTanner_codes, Dinur_2023_LDPC}. These more exotic code families cannot be embedded in any finite spatial dimension without long-range interactions \cite{Baspin_2022, baspin2023improved, dai2024local}. Despite their geometrical nonlocality, certain LDPC codes show feasible promise in platforms capable of long-range connectivity such as those based on trapped ions \cite{Cirac_1995, chen2023ionq, Quantinuum_H2-1} and neutral atoms \cite{Saffman_2016, Jenkins_2022, Evered_2023, Bluvstein_2023}, with recent experimental demonstrations showcasing their performance \cite{hong2024ghz, berthusen2024ss, reichardt2024tess}. In this work, we will focus on one particular code family called hypergraph product (HGP) codes \cite{Tillich_2014_HGP}. Various HGP code instances can achieve constant rate (but sublinear distance), single-shot decoding \cite{Fawzi_2018}, and passive error correction \cite{hong2024thermal}. HGP codes also possess distance-preserving stabilizer measurements \cite{manes2023hgp} as well as an arsenal of fault-tolerant gadgets \cite{Krishna_2021, Breuckmann_2024_fold, Quintavalle_2023, LRESC} that can enable fault-tolerant quantum computation with constant overhead \cite{gottesman2014, Fawzi_2018, Xu_2024_constant}.

Although constant-rate HGP codes support single-shot and passive error correction, these features crucially rely on the assumption that the codespace has already been prepared. The standard procedure for preparing these HGP codes from an unentangled product state is to perform $\mathrm{\Theta}(d) = \mathrm{\Theta}(\sqrt{n})$ rounds of repeated syndrome measurement in order to reliably obtain an initial syndrome baseline or Pauli frame \cite{Dennis_2002}. Since HGP codes can be decoded in constant depth after initialization using single-shot decoding, this initial $\mathrm{\Theta}(d)$ depth presents a temporal bottleneck in their implementation.

In this work, we introduce a single-shot protocol that prepares the codespace of constant-rate HGP codes in $O(1)$ depth and $O(d)$ spatial overhead; see Fig. \ref{fig:protocol sketch} for a schematic. Our construction generalizes dimension-jumping \cite{Bombin_2016}, a specific form of code-switching, from topological codes to more general quantum LDPC codes. 2D $\leftrightarrow$ 3D dimension-jumping in the surface code has been previously studied as a way to perform non-Clifford gates \cite{Brown_2020, Scruby_2022_jit} and generate long-range localizable entanglement \cite{3D_cluster, Bravyi_2020_sc}, but doing so in a single shot sacrifices performance \cite{Scruby_2022}. In contrast, since HGP codes support single-shot error correction, we can easily show that single-shot dimension-jumping is robust, and we can furthermore use existing single-shot HGP decoders \cite{Fawzi_2018, Higgott_2023} to correct errors during the process.

\begin{figure}[t]
    \centering
    \includegraphics[width=0.4\textwidth]{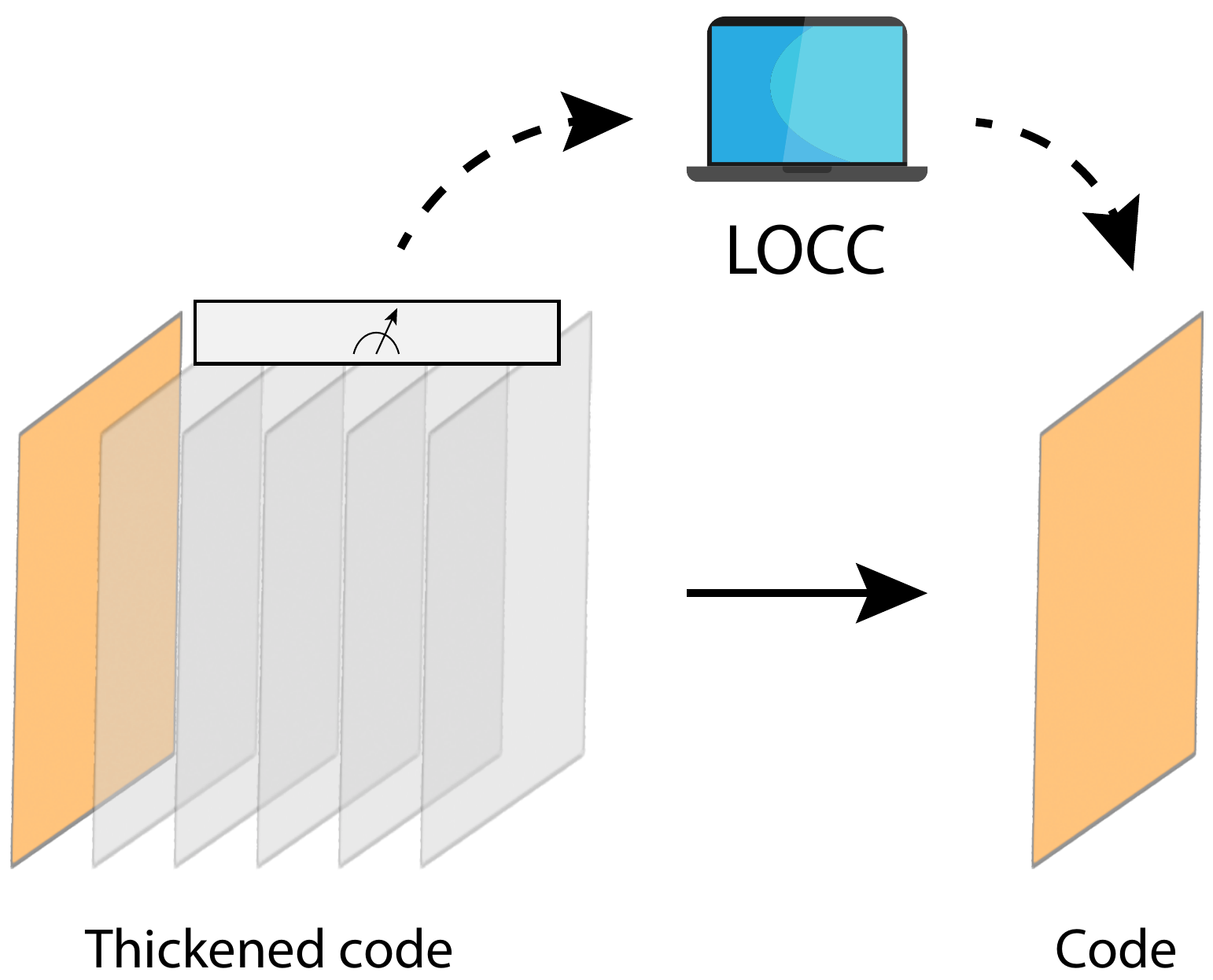}
    \caption{A high-level schematic of the state-preparation protocol is illustrated. The protocol involves fault-tolerant code-switching between the target HGP code and a ``thickened'' code with enhanced error-correcting properties. Local operations and classical communication (LOCC) in the form of measurements and feedback will be used to correct errors.}
    \label{fig:protocol sketch}
\end{figure}

The paper is organized as follows. In Section \ref{sec:background}, we review some preliminary mathematical background regarding HGP codes and the necessary algebraic machinery needed to describe our protocol. In Section \ref{sec:protocol}, we introduce the state-preparation protocol, provide a decoding algorithm, and analyze its fault tolerance against adversarial noise. In Section \ref{sec:beyond rep}, we discuss generalizations and introduce a modification that reduces the spatial overhead. In Section \ref{sec:numerics}, we conduct numerical simulations of the protocol under an independent noise model. In Section \ref{sec:outlook}, we provide some concluding remarks and discuss applications to fault-tolerant architectures based on HGP codes. The appendices provide formal statements and proofs.


\section{Background}\label{sec:background}

\subsection{Hypergraph product codes}

In this section, we briefly review the construction of quantum codes from classical codes via the hypergraph product and introduce some necessary machinery to describe the state preparation protocol.

Given $n$ bits, each taking on a value of 0 or 1, the set of all $2^n$ possible bitstrings forms an $n$-dimensional binary vector space $\mathbb{F}^n_2$, where addition is performed modulo 2. A classical $[n,k,d]$ linear code encoding $k$ logical bits is formally a $k$-dimensional subspace of logical codewords within this $n$-dimensional space. We can choose $k$ basis vectors in this subspace to define a $k \times n$ generator matrix $G \in \mathbb{F}^{k\times n}_2$, which is a compact representation of the code. The code distance $d$ is defined as the minimum number of bit flips to transition between any two codewords. Since in a linear code, the difference between two codewords is also a codeword, the code distance is simply the minimum Hamming weight amongst all $2^k-1$ nonzero codewords. It is often useful to alternatively present this linear code using an $m \times n$ parity-check matrix $H \in \mathbb{F}^{m\times n}_2$ which satisfies $HG^\transpose = 0$ and so $\operatorname{rs}(G) \subseteq \ker(H)$; in other words, $H$ encodes parity constraints that every codeword must satisfy. Note that there exist many different presentations of $H$, similar to our freedom in choosing basis vectors for $G$. We say that $H$ is $(\Delta_{\rm c}, \Delta_{\rm r})$-LDPC if the nonzero weight of every column and row is bounded by constants $\Delta_{\rm c}, \Delta_{\rm r} = O(1)$ respectively \cite{Gallager_1962}. From an algebraic perspective, a classical linear code can be described by the 2-term chain, or 1-complex
\begin{align}\label{eq:2-term chain}
    B \xlongrightarrow{H} S \, ,
\end{align}
where $B$ is the space of physical bitstrings or errors, and $S$ is the space of violated parity checks or error syndromes \cite{Bombin_2007_Hom}.

Similar to the classical story, a quantum $\llbracket n,k,d \rrbracket$ stabilizer code is a $2^k$-dimensional subspace of the $2^n$-dimensional Hilbert space on $n$ qubits \cite{gottesman1997stabilizer}. In analogy to the classical parity constraints, the quantum codespace is defined as the simultaneous +1 eigenspace of a set of commuting Pauli strings (i.e. tensor products of $I,X,Y,Z$), which generates the code's stabilizer group. The code distance is similarly defined as the minimum number of local operators needed to transition between different codewords. A Calderbank-Shor-Steane (CSS) code is a special type of stabilizer code whose Pauli checks are purely $X$-type or $Z$-type \cite{Calderbank_1996, Steane_1996}. As such, they can be packaged into two binary parity-check matrices $H_X$ and $H_Z$, where a nonzero entry in a row of $H_X$ ($H_Z$) denotes the location of a Pauli $X$ ($Z$) operator. The commutativity of the Pauli checks then translates into the orthogonality condition $H^{}_X H^\transpose_Z = 0$ between the two parity-check matrices. We say a CSS code is $(\Delta_{\rm c},\Delta_{\rm r})$-LDPC if both $X$-type and $Z$-type parity-check matrices are individually $(\Delta_{\rm c},\Delta_{\rm r})$-LDPC. From an algebraic standpoint, a CSS code can be described by the 3-term chain
\begin{align}\label{eq:CSS 3-term chain}
    S_X \xlongrightarrow{H^\transpose_X} Q \xlongrightarrow{H_Z} S_Z \, ,
\end{align}
where $S_X$ ($S_Z$) is the space of error syndromes according to $H_X$ ($H_Z$), $Q$ is the space of physical qubit errors, and the orthogonality condition is explicitly expressed in terms of the composition of consecutive maps being zero in a chain complex. Logical $\bar{X}$ operators are classified according to the first homology group $\mathcal{H}_1 \equiv \ker H_Z / \im H^\transpose_X$, which consists of Pauli $X$ strings that satisfy all $Z$-checks modded out by the $X$-type stabilizer subgroup. Logical $\bar{Z}$ operators are classified according to the first cohomology group $\mathcal{H}^1 \equiv \ker H_X / \im H^\transpose_Z$.

Given two classical linear codes with parity-check matrices $H_1$ and $H_2$, we can take a (homological) tensor product \cite{Freedman_2014, bravyi2013hom} of their associated 2-term chains \eqref{eq:2-term chain} to obtain the 2-dimensional complex
\begin{equation}
\begin{tikzcd}
    & {B_1 \otimes S_2} && {S_Z} \\
    {B_1 \otimes B_2} && {S_1 \otimes S_2} & Q \\
    & {S_1 \otimes B_2} && {S_X}
    \arrow["{\ident \otimes H_2}", from=2-1, to=1-2]
    \arrow["{H^\transpose_1 \otimes \ident}"', from=2-3, to=1-2]
    \arrow["{H_Z}", from=2-4, to=1-4]
    \arrow["{H^\transpose_1 \otimes \ident}", from=3-2, to=2-1]
    \arrow["{\ident \otimes H_2}"', from=3-2, to=2-3]
    \arrow["{H^\transpose_X}", from=3-4, to=2-4]
\end{tikzcd}
\end{equation}
which we can then ``collapse'' into a one-dimensional 3-term chain by combining the spaces $B_1 \otimes B_2$ and $S_1 \otimes S_2$. The HGP code is then defined as the quantum CSS code associated with this 3-term chain \cite{Tillich_2014_HGP}. Explicitly, the CSS parity-check matrices take the form
\begin{subequations}\label{eq:HGP parameters}
\begin{align}
    H^{}_X &= \big(\, H^{}_1 \otimes \ident_{n_2} \;\big|\; \ident_{m_1} \otimes H^\transpose_2 \,\big) \\
    H^{}_Z &= \big(\, \ident_{n_1} \otimes H^{}_2 \;\big|\; H^\transpose_1 \otimes \ident_{m_2} \,\big) \, ,  \label{eq:HGP H_Z}
\end{align}
\end{subequations}
and we can verify that $H^{}_X H^\transpose_Z = 2\big( H^{}_1 \otimes H^\transpose_2 \big) = 0$ over $\mathbb{F}_2$. Geometrically, the tensor products in \eqref{eq:HGP parameters} bestow a certain ``two-dimensional'' structure to the HGP code, where the connectivity of one classical code is ``copied'' along one dimension and the other along the second dimension; see Fig. \ref{fig:2D HGP layout} for an illustration. This structure makes HGP codes especially appealing for hardware which can connect distant rows and columns of qubits in parallel such as the neutral-atom platform \cite{Xu_2024_constant}. Denoting transposed code parameters with the transpose symbol, the $\llbracket n,k,d \rrbracket$ parameters of the HGP code are given by
\begin{subequations}
\begin{align}
    n &= n_1n_2 + m_1m_2  \\
    k &= k_1k_2 + k^\transpose_1 k^\transpose_2  \\
    d &= \min\left( d_1, d_2, d^\transpose_1, d^\transpose_2 \right) \, ,
\end{align}
\end{subequations}
which can be obtained from properties of the homological tensor product. Many properties of a hypergraph product are inherited from its input classical codes. For example, if the input codes are LDPC, then so is the HGP code. If the input codes have constant rate (i.e. $k=\mathrm{\Theta}(n)$), then the HGP code will as well. However, the code distance can only achieve an asymptotic scaling of $d = O(\sqrt{n})$.

\begin{figure}[t]
    \centering
    \includegraphics[width=0.35\textwidth]{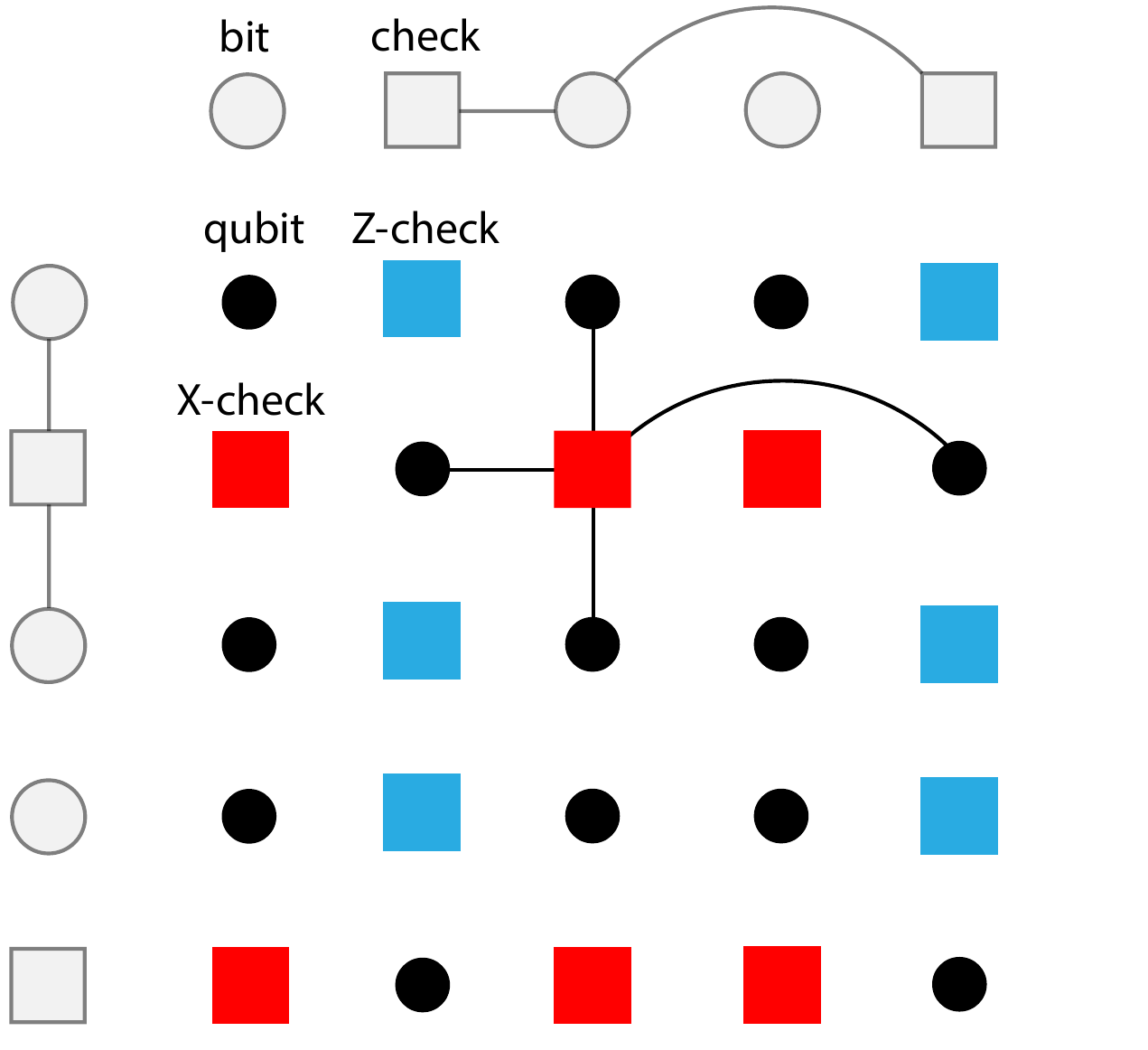}
    \caption{The hypergraph product of two classical linear codes is depicted in a 2D layout. Gray circles and squares denote classical bits and parity checks respectively. Solid black circles, red squares, and blue squares denote qubits, $X$-checks, and $Z$-checks respectively. The connectivity of the resulting HGP code is inherited from its classical codes, as shown for a subset of edges.}
    \label{fig:2D HGP layout}
\end{figure}

Since the hypergraph product works for any two classical codes, optimizing the quantum code parameters reduces to a problem of optimizing the classical code parameters. Currently, the best classical code parameters are achieved by probabilistic methods. For instance, random classical LDPC codes are known to approach the Gilbert–Varshamov bound, a general rate-distance tradeoff, for linear codes \cite{Mosheiff_2021}.


\subsection{Single-shot error correction}

For stabilizer codes, error correction typically proceeds by (\emph{i}) measuring the set of Pauli check operators to obtain the error syndrome and then (\emph{ii}) decoding the error syndrome to obtain a correction operator to be applied back to the code. The measurement step will collapse any arbitrary error into a Pauli string, a phenomenon known as error digitization, and so it suffices to focus on only decoding Pauli errors \cite{Shor_1995}. For CSS codes, the analysis further factorizes into independently correcting $X$-type and $Z$-type errors. Since our focus is on HGP codes, which are CSS codes, we will hence drop the $X/Z$ subscript for brevity unless otherwise noted.

Given a physical error $\mathbf{e} \in Q$, its error syndrome is defined as $\mathbf{s}(\mathbf{e}) \equiv H \mathbf{e} \in S$, where $H$ is either $H_X$ or $H_Z$. The primary objective of (maximum likelihood) decoding can be summarized as: given a syndrome $\mathbf{s}$, compute the most probable error $\hat{\mathbf{e}}$. Note that there can be many different errors, differing by elements of $\ker(H)$, that all give the exact same syndrome. In order to know which error in this set is the most probable, we need some information about the structure of the underlying noise. A popular noise model to test worst-case scenarios is adversarial noise, where one examines errors below some cutoff weight, which typically increases with the code distance. In this simple error model, the most probable error is usually the one with the lowest weight, and so one usually designs a decoder which tries to output a minimal-weight correction. Decoding is successful if the combination $\mathbf{e} + \hat{\mathbf{e}}$ leaves the codespace unchanged, i.e. it is an element of the code's stabilizer group. In this paper, we will focus on the adversarial noise model. In the outlook (Sec. \ref{sec:outlook}), we comment on the more realistic scenario of local stochastic noise.

In real hardware, syndrome extraction is a physical operation and will be subject to errors itself. As a consequence, the obtained error syndrome will not always be faithful, and if we naively decode the corrupted syndrome, our inferred correction may unintentionally introduce even more errors than before. As a example, suppose we have a large surface code with no errors on the data qubits but a single ancilla error causing a flipped check in the bulk. If we input this weight-1 syndrome into a matching decoder, it will match the flipped check to a boundary, resulting in an unintentional, residual error of potentially $O(d)$ weight that may overwhelm the code in the next QEC cycle. The typical approach to counteract measurement errors in the surface code is to perform $\mathrm{\Theta}(d)$ rounds of syndrome measurements and then decode over this entire history of syndromes \cite{Dennis_2002}. Remarkably, there exist families of codes and decoders where one can keep residual errors under control with only a single round of noisy syndrome measurement; we call such codes single-shot codes \cite{Bombin_2015}. These codes have additional structure that allows a decoder to detect and correct for measurement errors.

We now briefly review two conditions which are sufficient for single-shot error correction against adversarial noise \cite{Campbell_2019, Quintavalle_2021}, and they encompass all known single-shot codes to date. These two conditions share some similarity, but their decoder implementations can be quite different, as we will explain below.

The first condition comes from a low-error property of the parity-check matrix called confinement, and it roughly stipulates that the number of violated checks, or energy penalty, grows with the error weight, up to some cutoff. For notation, we use $\abs{\cdot}$ to denote the Hamming weight. Since in a stabilizer code, the weight of an error can be arbitrarily increased by appending stabilizer elements, we also define a reduced weight $\norm{\cdot}$, which is the minimum weight of an error over its stabilizer group orbit. Formally, confinement is defined as follows.

\begin{defn}[Confinement \cite{Quintavalle_2021}]
\label{defn:confinement}
    Let $t>0$ be an integer and $f: \mathbb{Z}\rightarrow\mathbb{R}$ an increasing function. For a parity-check matrix $H \in \mathbb{F}^{m\times n}_2$, we say it is $(t,f)$-confined if for any error $\mathbf{e} \in \mathbb{F}^{n}_2$ with reduced weight $\norm{\mathbf{e}} \leq t$, its syndrome $\mathbf{s}(\mathbf{e}) = H\mathbf{e}$ obeys
    \begin{align}
        f\big(\abs{\mathbf{s}(\mathbf{e})}\big) \geq \norm{\mathbf{e}} \, .
    \end{align}
\end{defn}

If the confinement function $f$ is an increasing function independent of $n$, and the confinement cutoff $t$ grows polynomially with $n$, then the code is single-shot against adversarial noise \cite{Quintavalle_2021}. For confined codes, decoding typically proceeds by including the parity checks themselves as noisy variables during the weight minimization. Confinement then guarantees that the low-energy state space is partitioned into clusters that are well-separated in Hamming distance \cite{Anshu_2023}. This cluster separation allows a minimum-weight decoder to stay within its starting cluster, i.e. not fail too badly, and thus any residual error will be bounded by the size of the cluster (governed by $t$ and $f$).

The second condition for single-shot error correction comes from a low-syndrome property called soundness, and it stipulates that low-weight syndromes can always be produced by low-weight errors. Formally, it is defined as follows.

\begin{defn}[Soundness \cite{Campbell_2019}]
\label{defn:soundness}
    Let $t>0$ be an integer and $f: \mathbb{Z}\rightarrow\mathbb{R}$ an increasing function. For a parity-check matrix $H \in \mathbb{F}^{m\times n}_2$, we say it is $(t,f)$-sound if for any valid syndrome $\mathbf{s} \in \im(H)$ with $\abs{\mathbf{s}} \leq t$, there exists an error $\mathbf{e} \in \mathbb{F}^n_2$ satisfying
    \begin{align}
        \norm{\mathbf{e}} \leq f\big(\abs{\mathbf{s}(\mathbf{e})}\big) \, .
    \end{align}
\end{defn}

Similarly to confinement, if the soundness function $f$ is an increasing function independent of $n$, and the soundness cutoff $t$ grows polynomially with $n$, then the code is single-shot against adversarial noise \cite{Campbell_2019}. In fact, for LDPC codes, soundness (Def. \ref{defn:soundness}) implies confinement (Def. \ref{defn:confinement}), with the two cutoffs being related by the $O(1)$ column weight of $H$ \cite{Quintavalle_2021}. Nonetheless, soundness is a stricter condition because it entails redundancies amongst the check operators \cite{Sasson_2009}, which is not a requirement for confinement. The existence of linearly dependent parity checks means that $H$ is rank deficient and has nontrivial left null vectors. In particular, we can collect these left null vectors into rows of a ``metacheck'' matrix $M$ which then satisfies $MH = 0$. From the algebraic perspective, we can then extend the original 2-term chain (\eqref{eq:2-term chain} or either side of \eqref{eq:CSS 3-term chain}) to
\begin{align}\label{eq:LTC 3-term chain}
    B \xlongrightarrow{H} S \xlongrightarrow{M} R \, ,
\end{align}
where $R$ is the space of check redundancies amongst the rows of $H$. Since soundness implies confinement in LDPC codes, we can employ the same decoding strategy as previously mentioned. However, the structure \eqref{eq:LTC 3-term chain} enables an alternative decoding strategy. Notice that $\im(H) \subseteq \ker(M)$, i.e. valid syndromes are codewords of the code associated with the $S\rightarrow R$ chain. So given a corrupted syndrome $\mathbf{s}$, we can construct a ``metasyndrome'' $\mathbf{m}(\mathbf{s}) = M\mathbf{s} \in R$. We can then perform the following two-stage decoding procedure:
\begin{enumerate}
    \item Decode the metasyndrome $\mathbf{m}$ using $M$ to obtain a repaired syndrome $\hat{\mathbf{s}}$.
    \item Decode the repaired syndrome $\hat{\mathbf{s}}$ using $H$ to obtain a correction $\hat{\mathbf{e}}$.
\end{enumerate}
The soundness property (Def. \ref{defn:soundness}) then guarantees that the residual error $\mathbf{e} + \hat{\mathbf{e}}$ has bounded size \cite{Campbell_2019}.

Examples of single-shot stabilizer codes include the 4D surface code \cite{Dennis_2002}, whose CSS parity checks possess the soundness property, and constant-rate HGP codes \cite{Leverrier_2015}, whose CSS parity checks possess the confinement property.


\section{Single-shot codespace preparation}\label{sec:protocol}

In the previous section, we reviewed two sufficient conditions for single-shot error correction as well as two decoding strategies that leverage their corresponding properties. An important assumption that went into those results was that we started in the codespace of our error-correcting code. The reason is that the notion of ``small'' for errors and syndromes is only well-defined with respect to the codespace (no errors and zero syndrome), and so in order for those results to hold, we need to assume our codespace has been reliably prepared. In practice, the codespace needs to be prepared from an unentangled product state, and designing such a protocol to be fault-tolerant can be nontrivial.

For large-block CSS codes, a popular method of preparing $\ket{\overline{\mathbf{0}}}$ or $\ket{\overline{\bm{+}}}$ codeword states is transversal initialization \cite{Dennis_2002}: to prepare logical $\ket{\overline{\bm{+}}}$,
\begin{enumerate}
    \item Initialize all physical qubits in $\ket{+}^{\otimes n}$.
    \item Measure all $Z$-type check operators to obtain an initial $Z$-syndrome.
    \item Decode the $Z$-syndrome and apply the resulting $X$-correction.
\end{enumerate}
The procedure for logical $\ket{\overline{\mathbf{0}}}$ follows upon switching the roles of $X$ and $Z$. In the absence of errors, the first step guarantees that we are in the simultaneous +1 eigenstate of all $X$-checks and logical $\bar{X}$ operators. Since $Z$-checks commute with $X$-checks, the second step maintains the first property while projecting our product state into random\footnote{up to metacheck constraints} $\pm 1$ eigenstates of the $Z$-checks. At this point, our state is effectively $\ket{\overline{\bm{+}}}$ up to a (potentially large) $Z$ error. The third and last step returns us to the simultaneous +1 eigenstate of all the $Z$-checks. Note that any suitable correction in the last step suffices because we are not worried about logical $\bar{X}$ errors on the $\ket{\overline{\bm{+}}}$ state. In practice, the correction in step 3 is not actually performed, but rather the $Z$-syndrome from the second step is kept in software as a baseline (i.e. Pauli frame) from which future syndromes are subtracted from.

Like single-shot error correction, the goal of single-shot state preparation is to initialize a desired quantum state in a fault-tolerant way such that residual errors are controlled. Examining the transversal initialization strategy, we see that the first step that initializes $\ket{+}^{\otimes n}$ consists of $n$ single-qubit operations and so is inherently fault-tolerant. If our code is LDPC, then the second step can be implemented with a constant-depth circuit of 2-qubit gates; the propagation of correlated errors is bounded, and hence this step is also inherently fault-tolerant. The third step is where dangerous errors can occur. Suppose we obtain an unfaithful syndrome in step 2 due to measurement noise. If our code is sound and contains metachecks, we can detect and repair these syndrome errors using the method described in the previous section. We then set our repaired syndrome as the baseline syndrome for step 3, and the soundness of the code guarantees that any residual error remains small; this procedure has been recently emphasized in \cite{xu2024fastparallel}. Indeed, for sound codes, there is not really a distinction between initialization and subsequent QEC cycles. 

On the other hand, if our code does not contain any metachecks, then every parity check is an independent constraint and hence has equal probability to be measured as either +1 or -1 during transversal initialization. As a consequence, all $2^m$ error syndromes are equally likely (for $m$ checks), and we will have no information at our disposal to verify whether our obtained syndrome is faithful. If we naively pretend that this syndrome is faithful, then we risk encountering the same issue that plagued single-shot decoding of the surface code: a single flipped check may propagate through the decoder to a large physical error. Because constant-rate HGP codes can exhibit confinement without any metachecks\footnote{Random classical LDPC codes have full-rank parity-check matrices and good parameters with high probability \cite{ModernCodingTheory}.} \cite{Leverrier_2015}, they fall into a niche class of codes which possess single-shot QEC but \emph{not} single-shot initialization.

\subsection{Overview of the protocol}

We now describe our single-shot protocol for preparing the logical $\ket{\overline{\bm{+}}}$ state of constant-rate HGP codes \cite{Leverrier_2015}, which support single-shot error correction due to confinement but are not sound due to the lack of metachecks. The protocol will consist of two stages, which we will independently describe and analyze in the following two subsections. The main idea will be to first construct a different, but related, ``thickened'' code that possesses soundness and hence can be prepared in a single shot via transversal initialization. Then we perform fault-tolerant code-switching to return to our original HGP code. The preparation of $\ket{\overline{\mathbf{0}}}$ is analogous.

We will use tildes to denote quantities related to the thickened code --- e.g. $\llbracket \tilde{n}, \tilde{k}, \tilde{d} \rrbracket$. The first stage of the protocol to prepare $\ket{\overline{\bm{+}}}$ can be summarized as:
\begin{enumerate}
    \item Initialize all physical qubits in $\ket{+}^{\otimes \tilde{n}}$.
    \item Measure all $Z$-type check operators to obtain an initial $Z$-syndrome $\mathbf{s}_0$ and $Z$-metasyndrome $\mathbf{m}_0 = \tilde{M}_Z \mathbf{s}_0$.
    \item Decode the $Z$-metasyndrome using $\tilde{M}_Z$ \eqref{eq:3D HGP M_Z} to obtain a repaired syndrome $\hat{\mathbf{s}}_0$.
    \item Decode the repaired syndrome $\hat{\mathbf{s}}_0$ using $\tilde{H}_Z$ \eqref{eq:3D HGP H_Z} to return to $\ket{\overline{\bm{+}}}_{\rm th}$, up to a small residual $X$ error.
\end{enumerate}
The second stage can be summarized as:
\begin{enumerate}
\setcounter{enumi}{4}
    \item Measure $X$ on all physical qubits except on one of the boundary HGP codes and reconstruct a bulk $X$-syndrome.
    \item Input the above bulk $X$-measurement outcomes, $X$-syndrome, and a suitable single-shot decoder into Algorithm \ref{alg:collapse correction} to obtain a final boundary $Z$-correction.
\end{enumerate}
Note that only the first two steps of stage 1 and the first step of stage 2 are physical operations and hence prone to noise. Since we begin in the product $\ket{+}^{\otimes\tilde{n}}$ state, the only troublesome noise during stage 1 are physical $Z$ errors and syndrome measurement errors. Syndrome errors are accounted for during the syndrome repair step, and physical $Z$ errors are corrected during the second stage. Since all physical operations are either transversal or involve a finite-depth circuit (e.g. for single-ancilla measurement schemes), error propagation is also bounded. Thus, the entire protocol will be fault-tolerant as long as we are able to reliably correct the errors mentioned above.

\subsection{Stage 1: Homological dimension jump}\label{sec:stage 1}

In the first stage, we will transversally initialize the logical $\ket{\overline{\bm{+}}}$ state of a thickened code, which is built from the homological product of our original $\llbracket n,k,d \rrbracket$ HGP code, described by a 3-term chain \eqref{eq:CSS 3-term chain}, with a suitable classical code of distance $d$, described by a 2-term chain \eqref{eq:2-term chain} \cite{Hastings_2017_weight, Evra_2020}. To make the analysis easy to follow, we will first use a 1D repetition code of length $d$ for thickening. Later in Sec. \ref{sec:beyond rep}, we will discuss alternative classical codes which go beyond repetition. Since the original HGP code can be described by a two-dimensional complex, we can interpret this step as a dimension jump from a two-dimensional complex to a three-dimensional one. Denoting $H_X, H_Z$ and $h$ as the CSS parity-check matrices of the HGP code and the parity-check matrix of the repetition code respectively, the product complex is given by
\begin{equation}\label{eq:3D HGP complex}
\begin{tikzcd}
    & {(S_Z,S)} && {R_Z} \\
    {(S_Z,B)} && {(Q,S)} & {S_Z} \\
    {(Q,B)} && {(S_X,S)} & Q \\
    & {(S_X, B)} && {S_X}
    \arrow["{\ident \otimes h}", from=2-1, to=1-2]
    \arrow["{H_Z \otimes \ident}"', from=2-3, to=1-2]
    \arrow["{\tilde{M}_Z}", from=2-4, to=1-4]
    \arrow["{H_Z \otimes \ident}", from=3-1, to=2-1]
    \arrow["{\ident \otimes h}", from=3-1, to=2-3]
    \arrow["{H^\transpose_X \otimes \ident}"', from=3-3, to=2-3]
    \arrow["{\tilde{H}_Z}", from=3-4, to=2-4]
    \arrow["{H^\transpose_X \otimes \ident}", from=4-2, to=3-1]
    \arrow["{\ident \otimes h}"', from=4-2, to=3-3]
    \arrow["{\tilde{H}^\transpose_X}", from=4-4, to=3-4]
\end{tikzcd}
\end{equation}
which we can collapse into a 4-term chain as shown on the right. We thus obtain a new CSS code with parity-check matrices
\begin{subequations}\renewcommand*{\arraystretch}{1.5}
\label{eq:3D HGP H_X and H_Z}
\begin{align}
    \tilde{H}_X &= \left(\, H_X \otimes \ident \;\big|\; \ident \otimes h^\transpose \,\right)  \\
    \tilde{H}_Z &= \left(\begin{array}{c|c}
        H_Z \otimes \ident \;\; & \mathbf{0}  \\
        \ident \otimes h & \; H^\transpose_X \otimes \ident
    \end{array}\right)  \label{eq:3D HGP H_Z}
\end{align}
\end{subequations}
and $Z$-metacheck matrix
\begin{align}\label{eq:3D HGP M_Z}
    \tilde{M}_Z = \left(\, \ident \otimes h \;\big|\; H_Z \otimes \ident \,\right)  \, ,
\end{align}
from which we can verify that $\tilde{H}^{}_X \tilde{H}^\transpose_Z = H^{}_X H^\transpose_Z \otimes \ident + 2\big( H_X \otimes h^\transpose \big) = 0$ and $\tilde{M}_Z \tilde{H}_Z = 2\big( H_Z \otimes h \big) + H^{}_Z H^\transpose_X \otimes \ident = 0$ as required. Geometrically, the Tanner graph of this thickened code consists of ``sheets'' of the original HGP code interleaved with intermediate qubits and $Z$-checks; see Fig. \ref{fig:3D HGP layout} for an illustration. The sheets are labeled by the left block in \eqref{eq:3D HGP H_X and H_Z} and the intermediate qubits by the right block. For simplicity, we will assume that the two classical input codes for the HGP code are the same, so that $H_X$ and $H_Z$ both have $m$ rows. The quantum code parameters of the thickened code are then given by \cite{Zeng_2019}
\begin{subequations}
\begin{align}
    \tilde{n} &= nd + m(d-1) = O(n^{3/2})  \label{eq:tilde n}  \\
    \tilde{k} &= k = O(\tilde{n}^{2/3})  \\
    \tilde{d}_X &= d^2 = O(\tilde{n}^{2/3})  \\
    \tilde{d}_Z &= d = O(\tilde{n}^{1/3})  \, ,
\end{align}
\end{subequations}
where we used the fact that $d=O(\sqrt{n})$ for HGP codes.

\begin{figure}[t]
    \centering
    \includegraphics[width=0.48\textwidth]{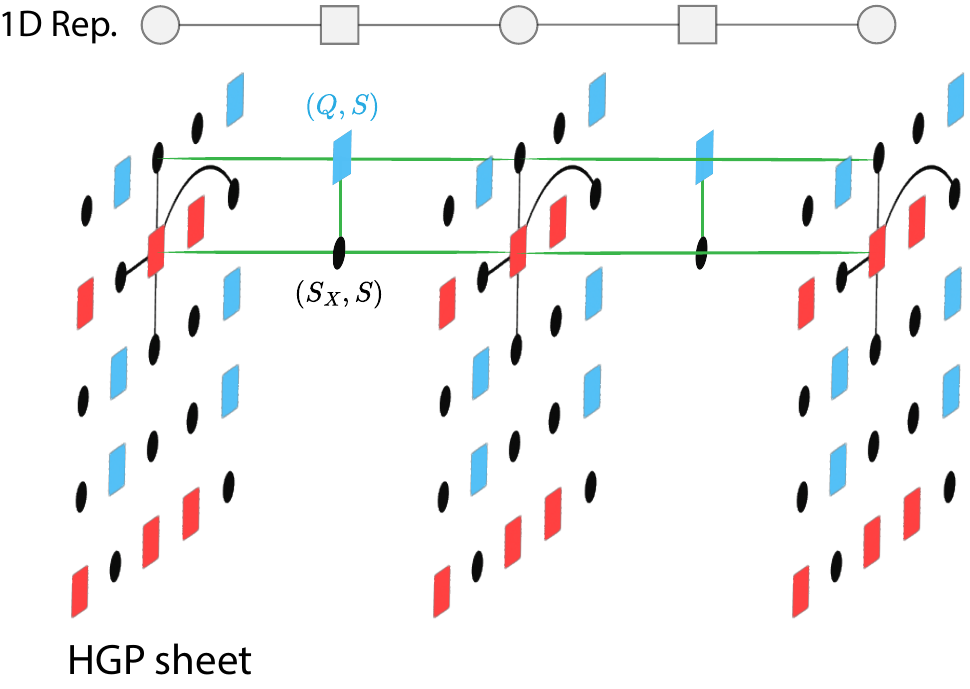}
    \caption{The thickened code \eqref{eq:3D HGP H_X and H_Z} for the HGP code in Fig. \ref{fig:2D HGP layout} with $z=3$ is illustrated. Three copies of the original HGP code are connected (green) by intermediate qubits and $Z$-checks, corresponding to the $(Q,S)$ and $(S_X,X)$ spaces of \eqref{eq:3D HGP complex}. Intermediate $Z$-checks connect pairs of identified qubits along adjacent HGP sheets, and intermediate qubits connect pairs of identified HGP $X$-checks. The 1D repetition code is overlayed above for reference. To reduce clutter, we only display a subset of intermediate qubits and $Z$-checks.}
    \label{fig:3D HGP layout}
\end{figure}

Importantly, we can utilize the metachecks \eqref{eq:3D HGP M_Z} to catch and repair any measurement errors during the transversal initialization. The end result is the logical $\ket{\overline{\bm{+}}}$ state of the thickened code up to a small residual $X$ error. We provide a bound on the residual error in Appendix \ref{app:stage 1 FT}, but we comment that it is essentially the same proof used to demonstrate single-shot error correction in sound codes \cite{Campbell_2019}; see also \cite{xu2024fastparallel}. The correction for $Z$ errors is performed in the next stage.

\subsection{Stage 2: Dimensional collapse}\label{sec:stage 2}

After the completion of stage 1, we can assume, up to a small residual error, that we are in the codespace of the thickened code. In the second stage, we begin by measuring all qubits in the $X$ basis except for one of the ``boundary'' HGP codes, which corresponds to a boundary bit of the 1D repetition code; see Fig. \ref{fig:stage 2 collapse} for an illustration. As a result, we effectively collapse the extra ``dimension'' introduced by the homological product, and we will be left with our original HGP code up to a (potentially large) $Z$ error.

To see how the dimensional collapse performs the desired code-switching, first observe that the HGP $Z$-checks on the left boundary do not extend past the boundary, and so these checks remain invariant during the bulk measurement. On the other hand, the boundary $X$-checks take the form
\begin{align}\label{eq:boundary X-checks}
    \tilde{H}_{X,\partial} = \left(\, H_X \;\big|\; \ident \,\right)  \, ,
\end{align}
where the left block denotes qubits on the boundary, and the right block denotes qubits in the adjacent intermediate sheet (green lines in Fig. \ref{fig:stage 2 collapse}). There is thus a one-to-one correspondence between each intermediate qubit and each $X$-check on the boundary. The random measurement outcomes of these intermediate qubits will determine the eigenvalues of the boundary $X$-checks. It follows that the boundary qubits are in the simultaneous eigenspace of all HGP $Z$-checks and $X$-checks, up to a potentially large $Z$ error due to the random $X$-check eigenvalues. It remains to show that the logical operators reduce to that of the HGP code on the boundary. Denote $G_X$ and $G_Z$ as the $X$-type and $Z$-type generator matrices for the HGP code, and denote $g$ as the generator matrix of the repetition code. Using the K\"unneth formula for product complexes, one can construct $X$-type and $Z$-type generator matrices for the logical operators of the thickened code \cite{bravyi2013hom}:
\begin{subequations}
\begin{align}
    \tilde{G}_Z &= \left(\, G_Z \otimes \ident_{d,1} \;\big|\; \mathbf{0} \,\right) \label{eq:tilde G_Z} \\
    \tilde{G}_X &= \left(\, G_X \otimes g \;\big|\; \mathbf{0} \,\right)  \label{eq:tilde G_X}  \, ,
\end{align}
\end{subequations}
where $\ident_{d,1}$ is a (truncated) $d \times d$ identity matrix with only the first column being nonzero. One can quickly verify that $\tilde{H}^{}_X \tilde{G}^\transpose_Z = \tilde{H}^{}_Z \tilde{G}^\transpose_X = 0$ as required. \eqref{eq:tilde G_Z} tells us that the logical $\bar{Z}$ operators of the thickened code are exactly the same as those of the HGP code, and since they live on the boundary, they survive the dimensional collapse. \eqref{eq:tilde G_X} tells us that the logical $\bar{X}$ operators of the thickened code are extended in the ``third dimension'', but their intersections with the boundary are exactly the HGP logical $\bar{X}$ operators. Thus, up to a physical $Z$ error, the dimensional collapse successfully code-switches between our thickened code and our HGP code.

At this point, in terms of fault tolerance, it seems like we have just returned to the same scenario in which we were hoping to avoid, this time with the $X$-checks. The $X$-checks of our HGP code post-collapse have random $\pm 1$ eigenvalues, and it is not immediately clear how to return to the codespace without inflicting a logical $\bar{Z}$ error. Fortunately, there is a way to correct these errors by utilizing the measurement outcomes from the bulk qubits, as we describe below.

\begin{figure}[t]
    \centering
    \includegraphics[width=0.48\textwidth]{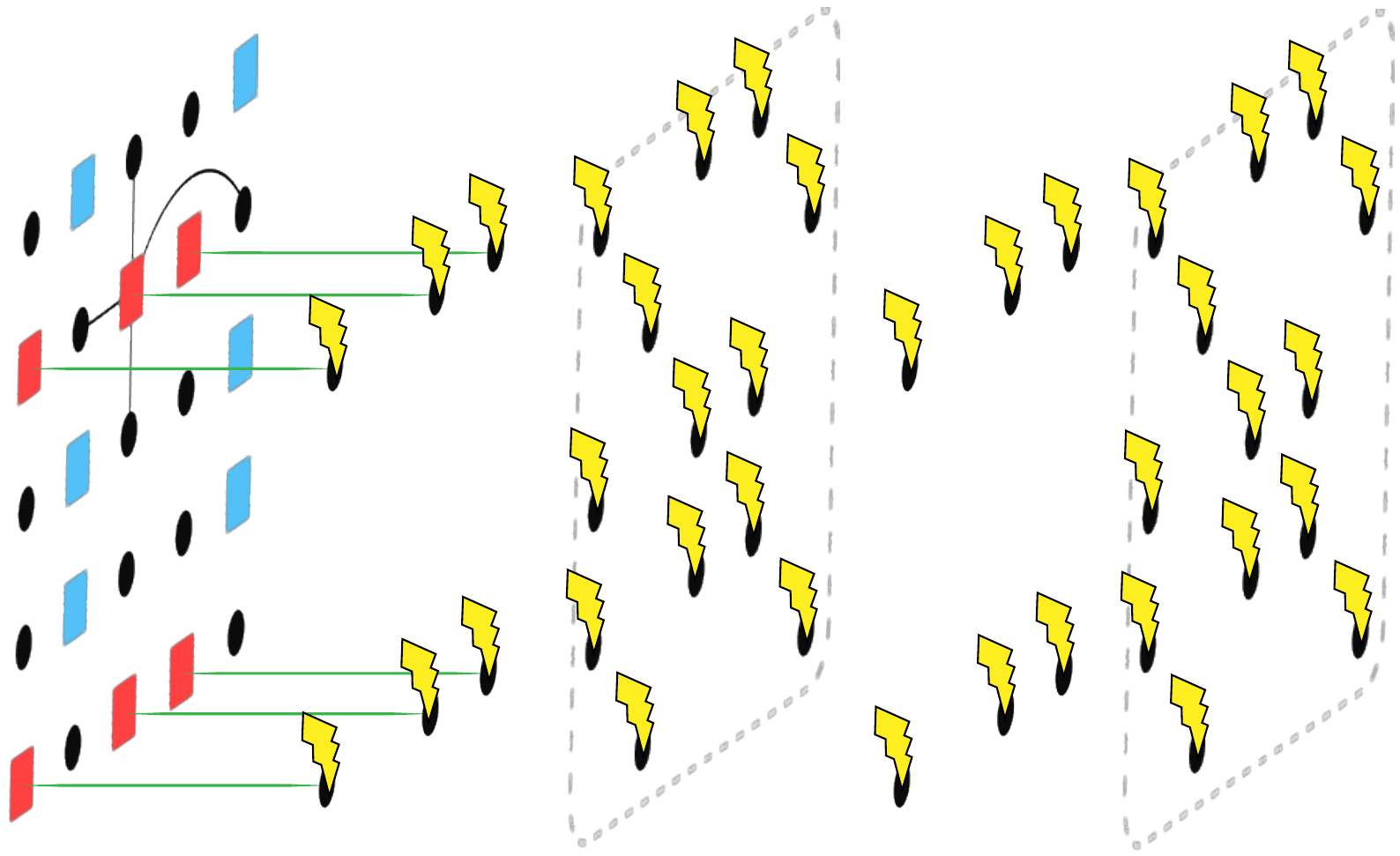}
    \caption{The dimensional collapse in stage 2 is depicted. All qubits other than a boundary HGP sheet are measured in the $X$ basis (denoted by lightning bolts). The intermediate qubits adjacent to the boundary sheet are connected by green edges to the boundary $X$-checks.}
    \label{fig:stage 2 collapse}
\end{figure}

In the state-vector picture, the logical $\ket{\overline{\bm{+}}}$ state of the thickened code has the form
\begin{align}\label{eq:logical + state}
    \ket{\overline{\bm{+}}}_{\rm th} &\propto \Bigg(\prod_{\{ \tilde{C}_Z \}} \frac{\ident+\tilde{C}_{Z}}{2}\Bigg) \ket{+}^{\otimes \tilde{n}}  \notag \\
    &\propto \sum_{\tilde{O}_Z\in\tilde{\mathcal{S}}_Z} \tilde{O}_Z \ket{+}^{\otimes \tilde{n}}  \, ,
\end{align}
where $\{\tilde{C}_{Z}\}$ is the collection of all $Z$-checks, and $\tilde{\mathcal{S}}_Z$ is the $Z$-type stabilizer subgroup of the thickened code. So we see that in the $X$-basis, the logical $\ket{\overline{\bm{+}}}$ state is a uniform superposition over all $Z$-stabilizer elements applied to $\ket{+}^{\otimes \tilde{n}}$. If we measure $X$ on all $\tilde{n}$ qubits, then we collapse the superposition onto a random basis state, which is equivalent to applying a random $Z$-stabilizer element to $\ket{+}^{\otimes \tilde{n}}$. Since we only measure a subset of all the qubits, the superposition \eqref{eq:logical + state} instead only partially collapses. Recall that the $Z$-stabilizer subgroup generated by $Z$-checks supported only on the boundary is precisely $\mathcal{S}_Z$ of our HGP code. Our post-collapse state then takes the form
\begin{align}\label{eq:post-bulk-measurement state}
    \mathcal{M}_{X,\text{bulk}} \ket{\overline{\bm{+}}}_{\rm th} &\propto \tilde{O}'_Z \sum_{O_Z \in \mathcal{S}_Z} O_Z \ket{+}^{\otimes n}_\partial \otimes \ket{+}^{\otimes (\tilde{n}-n)}_{\partial^c}  \notag \\
    &\propto \tilde{O}'_Z \ket{\overline{\bm{+}}}_\partial \otimes \ket{+}^{\otimes (\tilde{n}-n)}_{\partial^c}  \, ,
\end{align}
where $\tilde{O}'_Z \in \tilde{\mathcal{S}}_Z \setminus \mathcal{S}_Z$. In particular, the only component of $\tilde{O}'_Z$ which acts nontrivially on $\ket{\overline{\bm{+}}}_\partial$ consists of intermediate $Z$-checks with (partial) support on the boundary. Since $\tilde{O}'_Z$ arises from the randomness of measurement, we call it an \emph{intrinsic} $Z$ error.

To return to the codespace of our boundary HGP code, we need to undo the boundary action of the intrinsic $Z$ error $\tilde{O}'_Z$. We will leverage a causal structure induced by the 1D repetition code to perform this correction, generalizing a similar procedure for the 3D surface code \cite{Scruby_2022}. Starting from the opposite boundary, we locate the qubits in the $\ket{-}$ state and apply $Z$ to its identified qubit in the next sheet, which effectively pushes these $Z$ ``errors'' onto the next HGP sheet. We then iterate this procedure starting from this next sheet until we terminate at our boundary HGP code. To see why this algorithm undoes $\tilde{O}'_Z$ up to an element of $\mathcal{S}_{Z}$, we will analyze the effect of this procedure on the two types of $Z$-checks: intermediate checks and sheet checks. Since the algorithm is linear, our analysis will then also apply to arbitrary elements of $\tilde{\mathbf{S}}_Z$. Recall that an intermediate $Z$-check connects a pair of identified qubits supported on adjacent HGP sheets. So when we push our $Z$ errors between two neighboring sheets, we will cancel out its effect on the succeeding sheet. For a sheet $Z$-check located strictly on a single HGP sheet, we will push it all the way to the boundary, but since it is in the HGP stabilizer group, its effect will be trivial. Therefore, by linearity, the only nontrivial $Z$ errors that we push onto the boundary precisely correspond to intermediate $Z$-checks that touch the boundary. One can interpret the above procedure as computing the feedback corrections in a teleportation scheme \cite{Bolt_2016, Bolt_2018}.

The aforementioned algorithm only works in the absence of external errors. If we have external $Z$ errors such that the bulk pattern of $\ket{-}$ does not correspond to any $Z$-stabilizer element, we need to first correct these errors before we can proceed. Using the bulk measurement outcomes, we can infer the values of the bulk $X$-checks and reconstruct an $X$-syndrome in an attempt to correct for external $Z$ errors. However, since we did not measure our boundary sheet, what we have is an incomplete $X$-syndrome, and so we need to be careful with how to proceed with decoding. Fortunately, we can leverage the fact that our HGP code support single-shot decoding to reliably correct these $Z$ errors even with an incomplete $X$-syndrome. The crucial observation is that the bulk $X$-check Tanner graph is precisely the spacetime decoding graph for $d-1$ noisy rounds of $X$-syndrome measurement of the HGP code; the intermediate qubits label the location of syndrome measurement errors in the corresponding spacetime code \cite{Bacon_2015, gottesman2022, delfosse2023}; see Fig. \ref{fig:spacetime code} for an illustration. Using this spacetime mapping, we can reinterpret our bulk $X$-syndrome as $d-1$ rounds of noisy HGP $X$-syndromes, with the asymmetry of the bulk Tanner graph (missing boundary sheet on one side) providing us with an arrow of time. In this manner, we can employ a single-shot decoder for our HGP code to decode the bulk $X$-syndrome one sheet at a time, starting from the opposite HGP boundary and ending on our desired boundary. Furthermore, we can bake in this decoder during each step of the algorithm. The combined algorithm is described below in pseudocode.

\begin{figure}[t]
    \centering
    \includegraphics[width=0.39\textwidth]{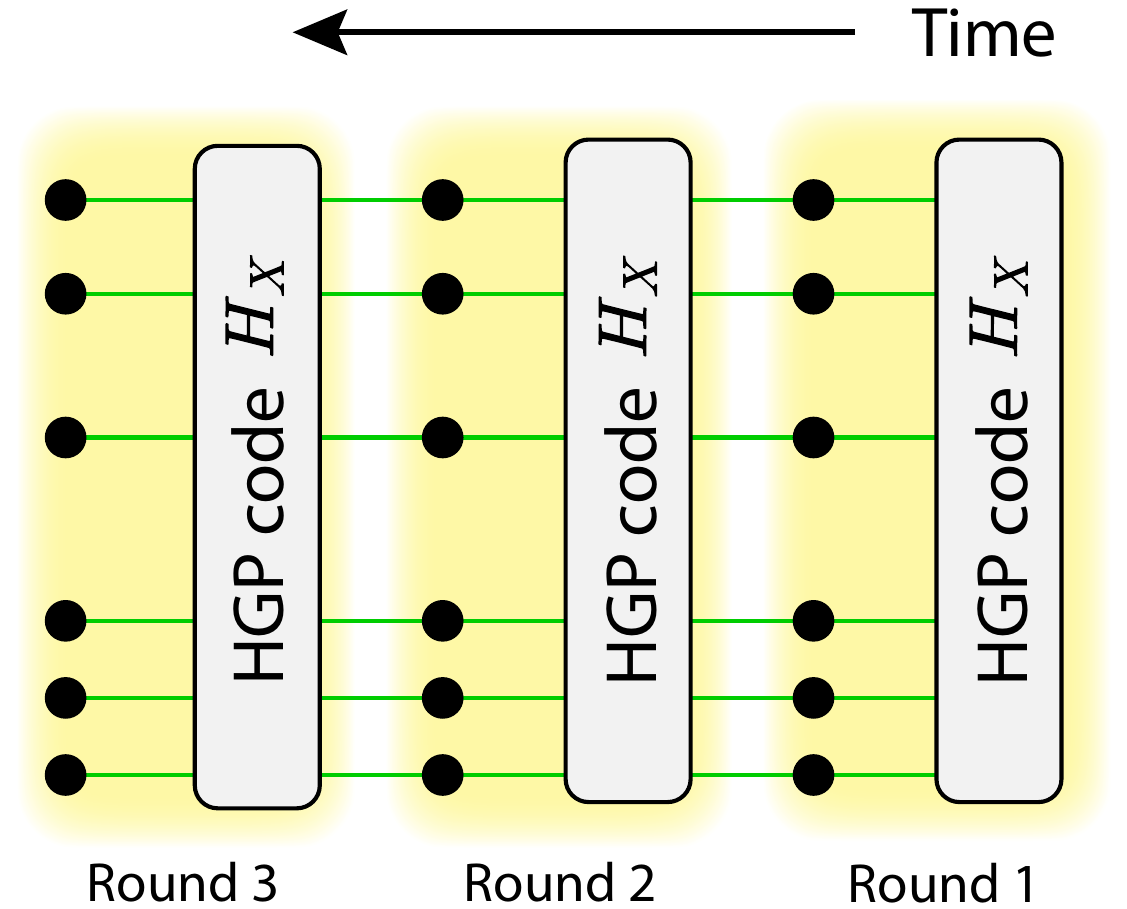}
    \caption{The interpretation of the bulk $X$-check Tanner graph as a spacetime code is depicted. Gray rectangles denote HGP sheets, and black circles denote the intermediate qubits connected to each $X$-check in its neighboring HGP sheets. The arrow of time runs from the opposite HGP boundary to the unmeasured one. The combination of each HGP sheet with its subsequent intermediate qubits can be interpreted as a round of noisy syndrome measurement.}
    \label{fig:spacetime code}
\end{figure}

\begin{algorithm}
\DontPrintSemicolon
\caption{$Z$-correction for stage 2}\label{alg:collapse correction}
\Input{bulk $X$-measurement outcomes $\{\mathbf{m}_i\}$ arranged by sheet, bulk $X$-syndromes $\{\mathbf{s}_i\}$ arranged by sheet, and a single-shot HGP decoder}
\Return{boundary $Z$-correction vector $\mathbf{z} \in \mathbb{F}^n_2$}
\;
$\mathbf{z} = \rm{zeros}(n)$  \tcp*{stored $Z$ errors}
$\hat{\mathbf{s}} = \rm{zeros}(m)$  \tcp*{inferred syndrome error}
\For{$\tau=1$ \KwTo $d-1$}{
    $\mathbf{s}' = \mathbf{s}_\tau + \hat{\mathbf{s}}$  \tcp*{$\tau=1$ is the opposite boundary}
    Decode $\mathbf{s}'$ to obtain an inferred syndrome error $\hat{\mathbf{s}}$ and an inferred $Z$ error $\hat{\mathbf{z}}$\;
    $\mathbf{z} \pluseq \mathbf{m}_\tau + \hat{\mathbf{z}}$\;
}
\end{algorithm}

As long as residual errors remain small during each round of single-shot decoding, then the final residual error that is pushed onto the boundary should also be small. A quantitative bound on the residual error is presented in Appendix \ref{app:stage 2 FT}. We note that one can also simply decode the entire bulk syndrome at once, which should further decrease any residual $Z$ error on the boundary at the cost of increased decoding complexity.

The fault tolerance of the entire protocol is analyzed in Appendix \ref{app:full protocol FT}, combining the results of Sections \ref{sec:stage 1} and \ref{sec:stage 2}.


\section{Beyond repetition}\label{sec:beyond rep}

In this section, we present a simple modification to the protocol which reduces the spatial overhead by a factor of at most 2, but at the cost of needing to initialize multiple HGP codes in parallel. Recall that for stage 1 to be fault-tolerant against syndrome measurement errors, we needed our classical code in the homological product to have the same distance as our quantum HGP code. We employed a $[d,1,d]$ repetition code and saw how its induced causality structure allowed us to formulate a decoding algorithm (Alg. \ref{alg:collapse correction}) for stage 2. The spatial overhead of the protocol essentially comes from the ratio $n/k$, which is equal to $d$ for the repetition code.

From the perspective of stage 2, we did not need the thickness to be $\ell=d$ because our HGP code was single-shot; $\ell=O(1)$ is sufficient for stage 2 to be fault-tolerant. It may be tempting to try and isolate more HGP sheets during stage 2, either by using a different classical code or by not measuring some HGP sheets, in order to lower the spatial overhead \emph{per HGP sheet}. However, naively doing so destroys the causality structure that we crucially leveraged to correct the intrinsic $Z$ error \eqref{eq:post-bulk-measurement state} on the boundary during stage 2. In particular, we may encounter a ``causal split'' in the classical Tanner graph, where there is more than one direction (check) that we can choose; see Fig. \ref{fig:causal split}. When this happens, we will encounter a $\ket{-}$ on one of our bulk HGP sheets (corresponding to a classical bit in the homological product) that could have been caused by multiple intermediate $Z$-checks (corresponding to a classical check), and without additional information it is unclear which $Z$-check we should pick.

\begin{figure}[t]
    \centering
    \includegraphics[width=0.3\textwidth]{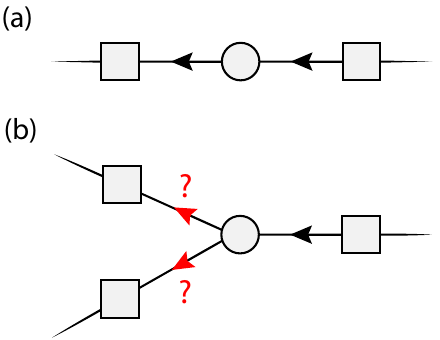}
    \caption{A causal structure is overlayed on top of two classical Tanner graphs. (a) In the repetition code, there is a clear arrow of time from right to left, which is used to correct the intrinsic $Z$ error on the left boundary during stage 2. (b) On a general Tanner graph, we may encounter a causal split, where there is no definitive path to take.}
    \label{fig:causal split}
\end{figure}

The modification we propose is to replace the 1D repetition code with a $[n,n-1,2]$ single-parity code, whose Tanner graph resembles a star. To increase the code distance, we can repeat the physical bits (i.e. concatenate with a repetition code) so that the final ``star'' code has parameters $[n,z-1,2n/z]$, where $z$ is the degree of the central check, and $n$ is a multiple of $z$. Geometrically, codewords have support on paths which begin and end on distinct branches of the star. Importantly, each physical bit is connected to at most two checks, and so we preserve the same causality structure that emerged in the repetition code. Specifically, we choose any branch as our starting point and call it the \emph{incoming} branch. We then label all other branches as \emph{outgoing} branches; see Fig. \ref{fig:star graphs}a for an illustration. When we perform the dimensional collapse in stage 2, we will now refrain from measuring the HGP code sheets located at the endpoints of all $z-1$ outgoing branches.

Now that the arrows of time have been established, we can proceed with stage 2 but with the following modifications to Algorithm \ref{alg:collapse correction}. The rule for pushing $Z$-corrections between sheets now follows the Tanner graph of the star code. In particular, when we arrive at the center of the star along the incoming branch, we will need to push our $Z$-correction onto each of the $z-1$ outgoing branches. Afterwards, the rest of the algorithm is the same as before within each outgoing branch. The correctness of this modified algorithm follows the same linearity argument given in Sec. \ref{sec:stage 2}.

\begin{figure}[t]
    \centering
    \includegraphics[width=0.31\textwidth]{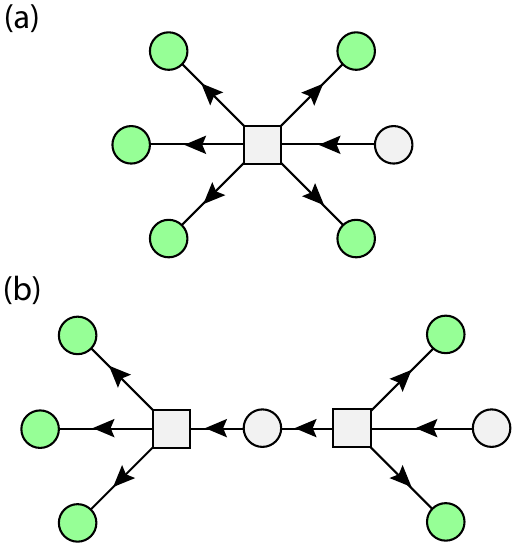}
    \caption{(a) The Tanner graph of the star code is depicted (repeated bits not shown). Green circles denote the bits which will host the final unmeasured HGP codes, and black arrows label the causal structure. (b) The same illustration is shown for a weight-reduced star code with one auxiliary bit in the center.}
    \label{fig:star graphs}
\end{figure}

\begin{table}[t]
\renewcommand{\arraystretch}{1.35}\normalsize
\begin{tabular}{|c|c|c|c|c|}
\hline
\;Code\; & $n$    & $k$   & $\frac{n}{kd}$                    & Weight \\ \hline
1D Rep.  & $d$    & $1$   & $1$                               & $2$        \\
Star           & \;$zd/2$\; & \;$z-1$\; & \;$\frac{1}{2} \cdot \frac{z}{z-1}$\; & $z$        \\ \hline
\end{tabular}
\caption{Parameters of the 1D repetition code and the degree-$z$ star code are tabulated. $k$ is also the number of HGP codes produced at the end of the protocol. The penultimate column displays the overhead reduction compared to the repetition code, and the last column displays the maximum check weight.}
\label{tab:rep vs star}
\end{table}

We can now compute the overhead reduction of the star code over the repetition code with fixed distance $d$. We use a heuristic $n/(kd)$, which equals one for the repetition code and is less than one for more efficient encodings. For our $[n,z-1,2n/z]$ star code, we have
\begin{align}
    \frac{n}{kd} = \frac{1}{2} \cdot \frac{z}{z-1} > \frac{1}{2} \, .
\end{align}
In particular, the overhead reduction per HGP code approaches $1/2$ for increasing $z$. A comparison between the star code and the repetition code is catalogued in Table \ref{tab:rep vs star}. Although stage 1 fault tolerance may begin to suffer at large $z$ due to deeper syndrome extraction circuits, one can perform weight-reduction \cite{LRESC, sabo2024weight} on the star code to evenly distribute the central check weight amongst multiple checks, at the cost of extra auxiliary bits; see Fig. \ref{fig:star graphs}b for an example. Since the weight-reduced code resembles a ``stretched'' version of the original code, with auxiliary bits connected in a chain-like fashion, it inherits the causal structure as well.

Lastly, we need to ensure that each HGP code is indeed in its logical $\ket{\overline{\bm{+}}}$ state at the end of the protocol. We can show this result by examining the codewords of the star code. We can write the generator matrix of the $[n,n-1,2]$ code in the form
\begin{align}\label{eq:g_star}
    g_\star = \left(\, \ident_z \;\big|\; \mathbf{1}_{z\times 1} \,\right) \, ,
\end{align}
where $\ident_z$ is the $z\times z$ identity matrix, and $\mathbf{1}_{z\times 1}$ is a column of all ones corresponding to the starting branch. For the thickened code, we can write its $X$-type generator matrix as
\begin{align}\label{eq:tilde G_X star}
    \tilde{G}_X = \left(\, G_X \otimes g_\star \;\big|\; \mathbf{0} \,\right) \, ,
\end{align}
where $G_X$ is the $X$-type generator matrix for the HGP code, analogous to \eqref{eq:tilde G_X}. From \eqref{eq:g_star} and \eqref{eq:tilde G_X star}, we see that for each HGP logical $\bar{X}$ operator, there exist $z-1$ distinct logical $\bar{X}$ operators in the thickened code supported on HGP sheets labeled by the incoming and outgoing branches of the star code. Since each branch hosts a distinct logical operator, and our initial logical state of the thickened code was the product $\ket{\overline{\bm{+}}}$ state, the resulting state on the $z-1$ HGP codes is $\ket{\overline{\bm{+}}}^{\otimes (z-1)}$.

Although the overhead reduction with the star code is only a constant, it applies at all distances (not just an asymptotic result). So if one desires to prepare multiple HGP code blocks in parallel, the star code provides an immediate spatial reduction, which may be beneficial for moderate system sizes. From Fig. \ref{fig:star graphs}, we can see that the causality structure and arrows of time can be interpreted as a type of message-passing schedule on the Tanner graph. For tree-like Tanner graphs such as that of the star code, it is quite easy to establish a schedule as we have seen. However, the code parameters of Tanner graphs without loops is extremely limited \cite{Etzion_1999}. For more general Tanner graphs, one may need to worry about scheduling around closed loops and choosing endpoints carefully. In particular, an endpoint vertex cannot have any outgoing arrows since we do not measure its corresponding HGP sheet.


\section{Numerical simulations}\label{sec:numerics}

\begin{figure*}[t]
    \centering
    \includegraphics[width=0.45\textwidth]{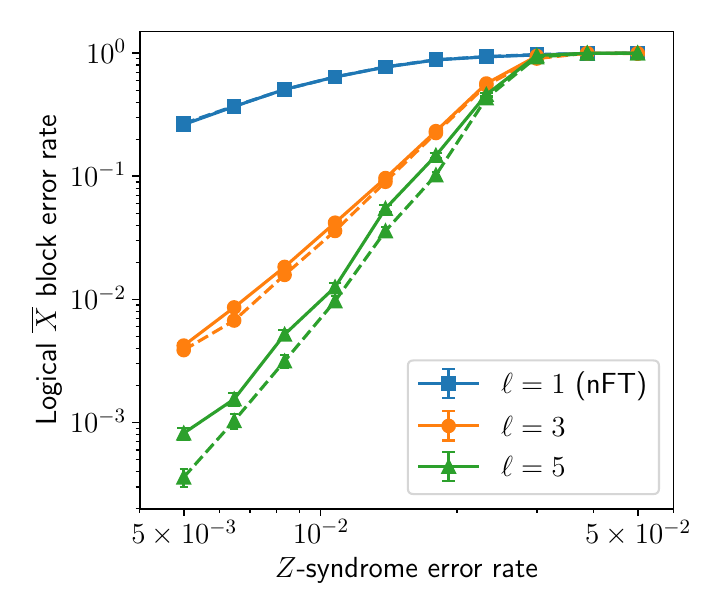}
    \hspace{2em}
    \includegraphics[width=0.45\textwidth]{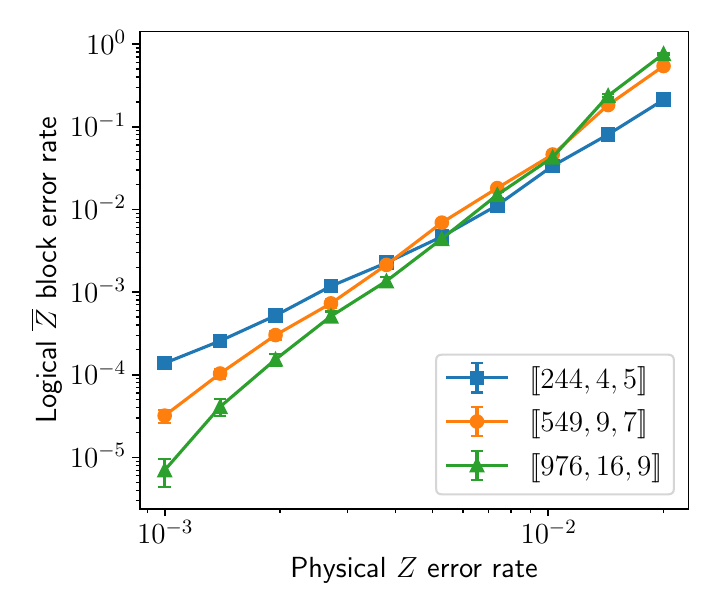}
    \caption{\textbf{Left:} The logical $\bar{X}$ error rate as a function of the syndrome error rate is plotted for the first simulation of a $\llbracket 549,9,9 \rrbracket$ HGP code with varying thickness $\ell$ (solid lines). The same logical error rate for $\ell$ rounds of repeated syndrome measurements is also displayed for comparison (dashed lines). \textbf{Right:} The logical $\bar{Z}$ error rate as a function of the physical $Z$ error rate is plotted for the second simulation of three HGP codes derived from the $(5,6)$-LDPC ensemble. Error bars denote the standard error $\sigma = \sqrt{p(1-p)/N}$ for $N$ samples.}
    \label{fig:numerics}
\end{figure*}

In the fault-tolerance analysis of the previous section, we had assumed both an adversarial error model as well as theoretically optimal decoders. In practice, we will mostly likely encounter a local stochastic error model, such as independent single-qubit errors. We will also need an efficient decoder since it is known that maximum-likelihood decoding of linear codes is generically NP hard \cite{Berlekamp_1978}.

We numerically benchmark the performance of the protocol under a phenomenological noise model: each physical qubit has an independent probability $p$ to incur a Pauli $X$ or $Z$ error, and each $Z$-check in the first stage is incorrectly measured with probability $p$. For syndrome decoding, we employ an open-source message-passing decoder \cite{Roffe_LDPC_Python_tools_2022} based on belief propagation with ordered-statistics postprocessing (BP+OSD) \cite{Panteleev_BPOSD_2021}, which displays good performance as a general-purpose quantum LDPC decoder \cite{Panteleev_BPOSD_2021, Roffe_decoding_2020, Higgott_2023}. The time complexity of BP is $O(n)$ while that of OSD is $O(n^3)$. We configure BP+OSD with the ``minimum-sum'' strategy (20 iterations) for BP and the ``combination-sweep'' strategy (search depth 20) for OS.

For the first simulation, we construct a $\llbracket 549,9,9 \rrbracket$ HGP code by taking the hypergraph product of a classical $[18,3,9]$, $(5,6)$-LDPC code with itself; the classical parity-check matrix is randomly selected using the bipartite configuration model. We then construct the thickened code \eqref{eq:3D HGP H_X and H_Z} by taking a homological product of this HGP code with a $[\ell,1,\ell]$ repetition code, where $\ell$ is a tunable thickness parameter. Without loss of generality, we assume that the true initial $Z$-syndrome is the all-zero vector. We then flip the value of each $Z$-check in the syndrome with probability $p$ to obtain a corrupted syndrome. We repair this corrupted syndrome using BP+OSD with respect to the $Z$-metacheck matrix \eqref{eq:3D HGP M_Z}, and then decode the repaired syndrome using BP+OSD with respect to $\tilde{H}_Z$ \eqref{eq:3D HGP H_Z} to obtain an $X$-correction. Since our true initial syndrome was zero by assumption, this $X$-correction hence becomes the residual $X$ error. To account for the second stage of the protocol, we take the boundary restriction of this residual $X$ error so that what remains is our original HGP code with a residual $X$ error. To verify whether this residual error is dangerous, we apply a new round of independent $X$ errors (with probability $p$ per qubit), add it to the residual error, and finally decode this combined error using BP+OSD with respect to $H_Z$ \eqref{eq:HGP H_Z}. We declare success if no logical $\bar{X}$ is applied at the end and failure otherwise. The simulation results are showcased on the left plot of Fig. \ref{fig:numerics}. Note that $\ell=1$ corresponds to non-fault-tolerant (nFT) transversal initialization of the original HGP code with no syndrome repair. Below $p \lesssim 3\times 10^{-2}$, we observe significant suppression of logical errors upon increasing the thickness $\ell$, providing evidence that the metachecks are successfully controlling residual errors. We also compare our protocol with the traditional method of repeated syndrome measurements and observe nearly identical performance.

For the second simulation, we randomly select three codes of increasing length from the $(5,6)$-LDPC ensemble and take the hypergraph product of each code with itself; each resulting HGP code has rate $k/n = 1/61 \approx 1.64\%$. For each HGP code, we construct its corresponding thickened code by taking the homological product of the HGP code with a classical repetition code whose length is equal to the distance of the HGP code. To simulate the decoding process for each thickened HGP code, we apply a $Z$ error to each physical qubit with probability $p$, compute the bulk error syndrome, and then run the decoding strategy given in Algorithm \ref{alg:collapse correction}, which will produce a final residual error on our boundary HGP code. For the single-shot BP+OSD decoder, we use the augmented parity-check matrix \cite{Higgott_2023}
\begin{align}
    H^{ss}_X = \left(\, H_X \;\big|\; \ident \,\right)  \, ,
\end{align}
where we connect an auxiliary qubit to each $X$-check. Corrections on the auxiliary qubits denote inferred syndrome errors. Note the similarity to \eqref{eq:boundary X-checks}. After Algorithm \ref{alg:collapse correction} has terminated, we probe the residual error with a final round of perfect syndrome decoding on the HGP code and verify whether a logical error has occurred. These simulation results are showcased on the right plot of Fig. \ref{fig:numerics}. For the family of HGP codes built from the $(5,6)$-LDPC ensemble, we observe threshold-like behavior around a physical $Z$ error rate of $p \approx 6 \times 10^{-3}$. We did not observe any definitive evidence of a threshold when using input LDPC ensembles with degrees less than $(5,6)$ and similar system sizes.

To conclude this section, we briefly comment on the classical decoding complexity of the entire protocol under BP+OSD. Recall that the thickened code has $\tilde{n} = \mathrm{\Theta}(n^{3/2})$ \eqref{eq:tilde n} and $\dim \tilde{M}_Z = \mathrm{\Theta}(\tilde{n})$ \eqref{eq:3D HGP M_Z}. Hence the syndrome repair step during stage 1 has time complexity $O(n^{9/2})$ from OSD. If we run the sequential decoder of Algorithm \ref{alg:collapse correction} for the second stage, then the decoding complexity is $O(n^3\sqrt{n}) = O(n^{7/2})$. If we decoded the entire bulk syndrome at once, then the decoding complexity is $O(n^{9/2})$. Parallelized algorithms leveraging LDPC properties \cite{wolanski2024, BPLSD, BPOTF} may significantly reduce the decoding complexity for low error rates.


\section{Outlook}\label{sec:outlook}

We have designed a single-shot $\ket{\overline{\mathbf{0}}} / \ket{\overline{\bm{+}}}$ state-preparation protocol for constant-rate HGP codes and demonstrated its fault tolerance against adversarial noise. The construction is inspired by the notion of dimension-jumping in 2D topological codes \cite{Bombin_2016, Scruby_2022}, and we provide a direct generalization to HGP codes using the homological product. The key ingredient is a mapping between a thickened code and a spacetime code, induced by the homological product with a structured classical code. This mapping allows us to import existing single-shot HGP decoders \cite{Fawzi_2018, Higgott_2023} to perform fault-tolerant code-switching during the second stage of the protocol.

Since the HGP codes we study are LDPC, the circuit depth of our protocol (with enough ancillas) is $O(1)$, which may provide significant improvements to logical clock speeds in fault-tolerant architectures where state preparation is a key subroutine \cite{xu2024fastparallel}. Unfortunately, the protocol does not preserve the constant-rate property of HGP codes, which is often desirable from a physical resource standpoint \cite{gottesman2014, Xu_2024_constant}. In terms of a combined spacetime overhead, our protocol essentially trades the original $O(d)$ time overhead with an $O(d)$ spatial overhead, and so we only expect the protocol to be practical in the regime where computational time becomes more valuable than physical qubit number. While not a completely satisfactory solution, the protocol is a first step in the direction of efficiently preparing HGP codewords, and we hope that the ideas set forth in this paper accelerate progress along this research thrust. In particular, it would be interesting to explore generalizations of Sec. \ref{sec:beyond rep}, such as whether other graphs could further lower the spatial overhead, perhaps even to a constant.

We also emphasize that our protocol is not limited to just HGP codes. Although the thickening procedure \eqref{eq:3D HGP complex} applies to any CSS code, we require a special structure in its 3-term chain \eqref{eq:CSS 3-term chain} for fault tolerance. For stage 1 to be fault-tolerant against syndrome measurement errors, we required a parity-check matrix of the thickened code to exhibit soundness. Since HGP codes are formed from the homological product, their classical ``transpose'' codes (by relabeling the CSS 3-term chain \eqref{eq:CSS 3-term chain} as \eqref{eq:LTC 3-term chain}, which can also be interpreted as a gauging procedure \cite{rakovszky2023ldpc}) are sound. This soundness is then inherited by the thickened code upon taking an additional homological product \cite{Campbell_2019}. So in order for us to use this protocol for some generic CSS code, we need its classical transpose code to be sound. Families of good quantum LDPC codes satisfy this property since their classical transpose codes are locally testable with linear soundness functions \cite{Panteleev_2022_LDPC, qTanner_codes, Dinur_2023_LDPC, Dinur_2022_LTC}. The second important ingredient is a single-shot decoder for stage 2, which has been demonstrated with the quantum Tanner codes \cite{Gu_2024}. Alternatively, we could have simply just used a constant-rate, four-dimensional homological product code which provides both $O(1)$ spatial and temporal overheads \cite{xu2024fastparallel}. However, these $O(1)$ constants are rather large in practice (e.g. rate gets decreased by a power of 4), and it remains to be seen whether higher-dimensional homological product codes can rival their two-dimensional cousins in finite-length performance.

It would also be interesting to see whether a sustainable threshold for the protocol can be proven for local stochastic noise. Even though for any local error rate there are $O(n)$ errors on average, because the code is LDPC, typical error configurations form separated clusters that can be decoded independently \cite{Kovalev_2013}. Since HGP codes can have linear confinement \cite{Leverrier_2015}, they do exhibit thresholds under local stochastic noise \cite{Quintavalle_2021}. As a result, we already know that stage 2 has a threshold under local stochastic noise due to the spacetime mapping, and it is precisely the threshold for single-shot error correction. It remains to be seen whether a sustainable threshold can be proven for stage 1. Our numerical simulations in Sec. \ref{sec:numerics} suggest that a sustainable threshold should exist. It has also recently been shown that higher-dimensional hypergraph product codes (homological products of 1-complexes) support distance-preserving syndrome extraction circuits \cite{tan2024hgp}, and so we further expect competitive performance under circuit-level noise.

Lastly, we speculate that our protocol could potentially provide an avenue to non-Clifford gates in HGP codes. Since the logical operators of HGP codes essentially mimick their classical codewords along horizontal and vertical directions (in a 2D layout), they can be chosen such that logical $\bar{X}$ and $\bar{Z}$ only intersect on at most one physical qubit. As a result, transversal gates will be limited to the Clifford group \cite{Burton_2022} just like for 2D codes \cite{Bravyi_2013_transversal}. However, for the surface code, there exist constructions in 3D which enable transversal non-Clifford gates \cite{Kubica_2015, Vasmer_2019_3D}, and one can use dimension-jumping to apply these ideas to 2D \cite{Brown_2020, Scruby_2022_jit, Scruby_2022}, thereby completing a universal gate set. It would interesting to see if the thickening procedure can be modified such that the thickened code admits transversal non-Clifford gates, which combined with single-shot dimensional collapse, may provide a route to circumvent magic state distillation in constant-rate HGP codes.

\section*{Acknowledgements}

We thank Victor Albert for pointing out the connection to spacetime codes. We also thank Daniel Gottesman and Michael Gullans for valuable discussions, and especially Noah Berthusen and Shi Jie Samuel Tan for feedback on the draft. This work is sponsored by the Department of Energy Office of Advanced Scientific Computing Research's (DoE ASCR) Quantum Testbed Pathfinder program via Award No. DE-SC0019040 and No. DE-SC0024220. We also acknowledge support from the DARPA MeasQuIT program.

\section*{Data availability}

All simulation code and data are available at \href{https://github.com/yifanhong/single-shot-HGP-prep}{this GitHub repository}.


\newpage
\appendix
\renewcommand{\thesubsection}{\thesection.\arabic{subsection}}

\section{Fault-tolerance analysis of stage 1}\label{app:stage 1 FT}

We first need to introduce some extra notation. Recall the 4-term chain of \eqref{eq:3D HGP complex}:
\begin{align}
    S_X \xlongrightarrow{\tilde{H}^\transpose_X} Q \xlongrightarrow{\tilde{H}_Z} S_Z \xlongrightarrow{\tilde{M}_Z} R_Z \, .
\end{align}
Define the single-shot distance $d_{\rm ss}$ as the minimum weight of nontrivial elements in the second homology group $\mathcal{H}_2 \equiv \ker \tilde{M}_Z / \im \tilde{H}_Z$, comprised of $Z$-syndromes that satisfy all $Z$-metachecks but cannot be produced by any physical $X$ error. By convention, we set $d_{\rm ss} = \infty$ if $\mathcal{H}_2$ is trivial.

Denote $\tilde{m} = md + n(d-1)$ as the number of $Z$-checks in the thickened code (i.e. rows of $\tilde{H}_Z$). We now prove the following Lemma.

\begin{lem}[Stage 1 residual $X$ error]
\label{lem:stage 1 residual}
    For any $Z$-syndrome error $\mathbf{s}_{\rm e} \in \mathbb{F}^{\tilde{m}}_2$ during the stage 1 protocol that satisfies $\abs{\mathbf{s}_{\rm e}} \leq d/2$, the residual $X$ error  $\mathbf{e}+\hat{\mathbf{e}} \in \mathbb{F}^{\tilde{n}}_2$ satisfies
    \begin{align}
        \norm{\mathbf{e}+\hat{\mathbf{e}}} \leq f\big( 2\abs{\mathbf{s}_{\rm e}} \big) \, ,
    \end{align}
    where $f(x) = x^3/4$.
\end{lem}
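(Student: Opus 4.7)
The plan is to track how a bounded adversarial syndrome error $\mathbf{s}_{\rm e}$ propagates through the metasyndrome repair (step 3) and the subsequent syndrome decoding (step 4), and to bound the residual reduced weight using the soundness of $\tilde H_Z$.

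First, I would fix notation: after step 2 the observed $Z$-syndrome takes the form $\tilde{\mathbf{s}} = \tilde H_Z \mathbf{e} + \mathbf{s}_{\rm e}$, where $\mathbf{e}$ is the (random) $X$-type Pauli frame that reproduces $\tilde{\mathbf{s}}$ in the absence of syndrome noise, and the metasyndrome is $\mathbf{m} = \tilde M_Z \tilde{\mathbf{s}} = \tilde M_Z \mathbf{s}_{\rm e}$ by the chain identity $\tilde M_Z \tilde H_Z = 0$. A minimum-weight decoder in step 3 returns $\hat{\mathbf{s}}_{\rm e}$ satisfying $\tilde M_Z \hat{\mathbf{s}}_{\rm e} = \mathbf{m}$; since $\mathbf{s}_{\rm e}$ itself is a preimage of $\mathbf{m}$, we have $\abs{\hat{\mathbf{s}}_{\rm e}} \leq \abs{\mathbf{s}_{\rm e}}$, and the residual syndrome error $\Delta\mathbf{s} \equiv \mathbf{s}_{\rm e} + \hat{\mathbf{s}}_{\rm e}$ lies in $\ker \tilde M_Z$ with $\abs{\Delta\mathbf{s}} \leq 2\abs{\mathbf{s}_{\rm e}} \leq d$ by the triangle inequality. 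A K\"unneth calculation on the homological product of the HGP 3-term chain with the $[d,1,d]$ repetition 2-term chain then shows that $\mathcal{H}_2 = \ker \tilde M_Z / \im \tilde H_Z$ either vanishes or has minimum weight $d_{\rm ss} \geq d$, so $\Delta\mathbf{s}$ automatically lies in $\im \tilde H_Z$ and the repaired syndrome $\hat{\mathbf{s}} = \tilde H_Z \mathbf{e} + \Delta\mathbf{s}$ is a genuine error syndrome.

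Second, in step 4 the decoder returns an $X$-correction $\hat{\mathbf{e}}$ with $\tilde H_Z \hat{\mathbf{e}} = \hat{\mathbf{s}}$, so the residual error $\mathbf{e} + \hat{\mathbf{e}}$ has syndrome $\tilde H_Z(\mathbf{e} + \hat{\mathbf{e}}) = \Delta\mathbf{s}$. By soundness of $\tilde H_Z$ there exists a preimage $\mathbf{e}^\star$ of $\Delta\mathbf{s}$ with $\norm{\mathbf{e}^\star} \leq f(\abs{\Delta\mathbf{s}}) \leq f(2\abs{\mathbf{s}_{\rm e}})$. Because $\mathbf{e} + \hat{\mathbf{e}}$ and $\mathbf{e}^\star$ differ by an element of $\ker \tilde H_Z$ -- a product of $X$-stabilizers and logical $\bar{X}$ operators, all of which fix $\ket{\overline{\bm{+}}}_{\rm th}$ -- the Pauli frame can be chosen so that $\mathbf{e} + \hat{\mathbf{e}}$ is a minimum-weight representative of its physically equivalent class on the codespace, yielding $\norm{\mathbf{e} + \hat{\mathbf{e}}} \leq \norm{\mathbf{e}^\star} \leq f(2\abs{\mathbf{s}_{\rm e}})$, as required.

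The main obstacle is establishing the cubic soundness function $f(x) = x^3/4$ for $\tilde H_Z$. I would argue constructively by exhibiting an explicit preimage: decomposing $\Delta\mathbf{s} = (\mathbf{s}_1, \mathbf{s}_2)$ according to the block structure of \eqref{eq:3D HGP H_Z}, use the linear soundness of the classical transpose code of the HGP code (inherent to its hypergraph-product origin) to lift $\mathbf{s}_1$ to some $\mathbf{e}_1$ slice-by-slice along the repetition direction, bound the updated syndrome $\mathbf{s}_2 + (\ident \otimes h)\mathbf{e}_1$ via the constant row weight of $h$, and finally invert the transverse block to recover $\mathbf{e}_2$ using the complementary soundness property. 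The cubic dependence arises from how the $d$-thickness of the repetition code compounds with the linear HGP soundness in this nested lift, and the chief delicacy lies in keeping the stabilizer-orbit bookkeeping consistent when translating preimage weights into reduced weights.
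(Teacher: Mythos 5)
Your proposal follows essentially the same route as the paper: obtain the metasyndrome, repair it with a minimum-weight $\hat{\mathbf{s}}$ so that $\abs{\hat{\mathbf{s}}} \leq \abs{\mathbf{s}_{\rm e}}$ and $\abs{\mathbf{s}_{\rm e} + \hat{\mathbf{s}}} \leq 2\abs{\mathbf{s}_{\rm e}} \leq d$, use triviality of $\mathcal{H}_2$ to conclude the repaired syndrome is physical, and then invoke soundness of $\tilde{H}_Z$ to bound the residual. Your Pauli-frame remark in the final step is a reasonable unpacking of what the paper leaves terse, namely that any element of $\ker\tilde{H}_Z$ acts trivially on $\ket{\overline{\bm{+}}}_{\rm th}$ so the ``true'' $\mathbf{e}$ can be taken as whatever coset representative makes $\mathbf{e}+\hat{\mathbf{e}}$ match the sound preimage.

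Two small but real gaps. First, you hedge that $\mathcal{H}_2$ ``either vanishes or has minimum weight $d_{\rm ss} \geq d$'' and then claim this forces $\Delta\mathbf{s} \in \im\tilde{H}_Z$. That inference fails at the boundary: with $\abs{\Delta\mathbf{s}} \leq d$ and $d_{\rm ss} = d$, the repaired syndrome could be a minimum-weight nontrivial metacocycle. You need the strict statement — which is what the paper uses — that $\mathcal{H}_2$ is actually trivial (equivalently $d_{\rm ss}=\infty$), which follows from the K\"unneth formula once one notes that $H_0$ of the HGP chain and $H_0$ of the repetition chain both vanish (full-rank $H_Z$ and full-rank $h$). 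Second, the heart of the lemma — that $\tilde{H}_Z$ is $(d, x^3/4)$-sound with $\mathcal{H}_2$ trivial — is what you defer to a constructive sketch in your last paragraph. The paper does not re-derive this; it cites Lemmas 5 and 6 of Campbell (2019), which establish exactly these facts for the homological product of a sound/cosound chain with a 1-complex, with the cubic exponent coming from their general two-sided bound (with one side trivial here). Your slice-by-slice lifting idea is a plausible direction but is not carried out, and it is not obvious it would reproduce the precise $x^3/4$ function rather than some other polynomial; as written it is a gap, not a proof. With those two points repaired — replace the $d_{\rm ss}\geq d$ hedge with $\mathcal{H}_2 = 0$, and either carry out the soundness derivation or cite the Campbell lemmas — your argument matches the paper's.
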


\begin{proof}
    Lemma 5 and Lemma 6\footnote{Technically, Lemma 6 of \cite{Campbell_2019} involves the homological product of two 3-term chains, whereas we take the product of a 3-term chain with a 2-term chain. However, the one-sided version of the statement (and Lemma 7) still holds upon setting one of the boundary maps of one of the chains (e.g. $\tilde{\delta}_0$) as the zero map, which results in $r_c=0$ in the proof.} of \cite{Campbell_2019} tell us that $\tilde{H}_Z$ of our thickened code \eqref{eq:3D HGP H_Z} is $(d,x^3/4)$-sound with $d_{\rm ss} = \infty$ according to Def. \ref{defn:soundness}. In the first step of decoding, we find a minimum-weight syndrome correction $\hat{\mathbf{s}}$ such that the total syndrome $\mathbf{s} + \mathbf{s}_{\rm e} + \hat{\mathbf{s}} \in \ker\tilde{M}_z$, i.e. it satisfies all metachecks. Since $\mathbf{s} \in \im\tilde{H}_Z \subseteq \ker\tilde{M}_z$, by linearity we also have that $\mathbf{s}_{\rm e} + \hat{\mathbf{s}} \in \ker\tilde{M}_z$. Furthermore, because $\mathcal{H}_2 = \ker \tilde{M}_Z / \im \tilde{H}_Z$ is trivial ($d_{\rm ss} = \infty$), $\mathbf{s}_{\rm e} + \hat{\mathbf{s}} \in \im \tilde{H}_Z$, i.e. it is a physical syndrome.

    Having established that $\mathbf{s} + \mathbf{s}_{\rm e} + \hat{\mathbf{s}}$ is a physical $Z$-syndrome, we now find a minimum-weight $X$-correction $\hat{\mathbf{e}}$ such that $\tilde{H}_Z \hat{\mathbf{e}} = \mathbf{s} + \mathbf{s}_{\rm e} + \hat{\mathbf{s}}$. Using the fact that $\mathbf{s} = \tilde{H}_Z \mathbf{e}$, we can write
    \begin{align}
        \mathbf{s}_{\rm e} + \hat{\mathbf{s}} &= \mathbf{s} + \tilde{H}_Z \hat{\mathbf{e}}  \notag \\
        &= \tilde{H}_Z \mathbf{e} + \tilde{H}_Z \hat{\mathbf{e}}  \notag \\
        &= \tilde{H}_Z \left( \mathbf{e} + \hat{\mathbf{e}} \right) \, .
    \end{align}
    We will now use the soundness property to upper bound the minimum weight of $\mathbf{e} + \hat{\mathbf{e}}$. Note that since $\hat{\mathbf{s}}$ is a minimum-weight syndrome correction, its weight is at most that of $\mathbf{s}_{\rm e}$ because $\hat{\mathbf{s}} = \mathbf{s}_{\rm e}$ is a possible solution. Thus, we have
    \begin{align}
        \abs{\mathbf{s}_{\rm e} + \hat{\mathbf{s}}} \leq \abs{\mathbf{s}_{\rm e}} + \abs{\hat{\mathbf{s}}} \leq 2\abs{\mathbf{s}_{\rm e}} \leq d  \, ,
    \end{align}
    and so $\mathbf{s}_{\rm e} + \hat{\mathbf{s}}$ lies within the soundness radius of the code. From the definition of soundness (Def. \ref{defn:soundness}), there then exists an $\mathbf{e} + \hat{\mathbf{e}}$ that satisfies
    \begin{align}
        \norm{\mathbf{e} + \hat{\mathbf{e}}} \leq f\big( \abs{\mathbf{s}_{\rm e} + \hat{\mathbf{s}}} \big) \leq f\big( 2\abs{\mathbf{s}_{\rm e}} \big)  \, ,
    \end{align}
    where in the last inequality we used the fact that $f(x)=x^3/4$ is convex.
\end{proof}


\section{Fault-tolerance analysis of stage 2}\label{app:stage 2 FT}

The idea will be to use the confinement property of constant-rate HGP codes \cite{Leverrier_2015} to show that the residual error in each iteration of Algorithm \ref{alg:collapse correction} is bounded. The proof utilizes the single-shot capability of confined codes, which has been shown in \cite{Quintavalle_2021}; we recite the relevant results for completeness.

We employ the Shadow decoder \cite{Quintavalle_2021} which takes in a parameter $t>0$ and operates in two steps: given a corrupted syndrome $\mathbf{s} + \mathbf{s}_{\rm e}$, 
\begin{enumerate}
    \item Find a minimum-weight syndrome correction $\hat{\mathbf{s}}$ such that for the repaired syndrome $\mathbf{s} + \mathbf{s}_{\rm e} + \hat{\mathbf{s}}$, there exists an error $\mathbf{e}'$ with $\norm{\mathbf{e}'} \leq t$ satisfying $H\mathbf{e}' = \mathbf{s} + \mathbf{s}_{\rm e} + \hat{\mathbf{s}}$.
    \item Find a minimum-weight correction $\hat{\mathbf{e}}$ such that $H\hat{\mathbf{e}} = \mathbf{s} + \mathbf{s}_{\rm e} + \hat{\mathbf{s}}$.
\end{enumerate}

If $H$ exhibits confinement, then the weight of the residual error under the Shadow decoder is bounded, as formalized in the following lemma.

\begin{lem}[Lemma 1 of \cite{Quintavalle_2021}]
\label{lem:confined residual}
    Consider a parity-check matrix $H$ that is $(t,f)$-confined. For any error $\mathbf{e}$ with $\norm{\mathbf{e}} \leq t/2$ and syndrome error $\mathbf{s}_{\rm e}$, the residual error $\mathbf{e}+\hat{\mathbf{e}}$ left by the shadow decoder with parameter $t/2$ is bounded by
    \begin{align}\label{eq:confined residual}
        \norm{\mathbf{e}+\hat{\mathbf{e}}} \leq f\big( 2\abs{\mathbf{s}_{\rm e}} \big) \, .
    \end{align}
\end{lem}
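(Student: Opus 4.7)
The plan is to show that the Shadow decoder's two minimization steps each incur only a controlled overhead, and then to apply the confinement inequality to the combined error--correction $\mathbf{e}+\hat{\mathbf{e}}$. The main quantitative inputs are: an obvious feasibility witness for Step 1, a feasibility witness for Step 2 built from Step 1's promised low-weight error, and the triangle inequality for the reduced weight $\norm{\cdot}$ (which holds because $\norm{\cdot}$ is the minimum Hamming weight over a coset of a linear subgroup). The hoped-for chain of inequalities is $\norm{\mathbf{e}+\hat{\mathbf{e}}}\leq t$, so that confinement applies, followed by $\norm{\mathbf{e}+\hat{\mathbf{e}}}\leq f(|\mathbf{s}_{\rm e}+\hat{\mathbf{s}}|)\leq f(2|\mathbf{s}_{\rm e}|)$.

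First I would bound $|\hat{\mathbf{s}}|$. The true error $\mathbf{e}$ has syndrome $\mathbf{s}=H\mathbf{e}$ and satisfies $\norm{\mathbf{e}}\leq t/2$ by hypothesis. Therefore the choice $\hat{\mathbf{s}} = \mathbf{s}_{\rm e}$ is admissible in Step 1 of the Shadow decoder run with parameter $t/2$: the repaired syndrome becomes $\mathbf{s}+\mathbf{s}_{\rm e}+\hat{\mathbf{s}}=\mathbf{s}$, and $\mathbf{e}$ itself is a witness error with reduced weight at most $t/2$. Since Step 1 outputs a minimum-weight $\hat{\mathbf{s}}$ among admissible syndrome corrections, we must have $|\hat{\mathbf{s}}|\leq|\mathbf{s}_{\rm e}|$, and hence $|\mathbf{s}_{\rm e}+\hat{\mathbf{s}}|\leq 2|\mathbf{s}_{\rm e}|$.

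Next I would bound $\norm{\hat{\mathbf{e}}}$. By the very definition of Step 1, the repaired syndrome admits some error $\mathbf{e}'$ with $\norm{\mathbf{e}'}\leq t/2$ and $H\mathbf{e}'=\mathbf{s}+\mathbf{s}_{\rm e}+\hat{\mathbf{s}}$. Step 2 then returns a minimum-weight $\hat{\mathbf{e}}$ with the same syndrome, so $\norm{\hat{\mathbf{e}}}\leq\norm{\mathbf{e}'}\leq t/2$. Combining with $\norm{\mathbf{e}}\leq t/2$ via the triangle inequality for the reduced weight, we obtain $\norm{\mathbf{e}+\hat{\mathbf{e}}}\leq t$, placing the combined error within the confinement radius of $H$. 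Its syndrome is $H(\mathbf{e}+\hat{\mathbf{e}})=\mathbf{s}+(\mathbf{s}+\mathbf{s}_{\rm e}+\hat{\mathbf{s}})=\mathbf{s}_{\rm e}+\hat{\mathbf{s}}$, so $(t,f)$-confinement (Def.~\ref{defn:confinement}) gives $\norm{\mathbf{e}+\hat{\mathbf{e}}}\leq f(|\mathbf{s}_{\rm e}+\hat{\mathbf{s}}|)\leq f(2|\mathbf{s}_{\rm e}|)$, using monotonicity of $f$ in the last step.

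The main subtlety is really just bookkeeping: being careful that ``minimum weight'' in Step 2 is interpreted as minimum reduced weight (so that $\norm{\hat{\mathbf{e}}}\leq\norm{\mathbf{e}'}$ is immediate rather than requiring a stabilizer-equivalence argument), and verifying that $\norm{\cdot}$ does satisfy the triangle inequality. The latter follows because if $\norm{\mathbf{a}}=|\mathbf{a}+\mathbf{c}_a|$ and $\norm{\mathbf{b}}=|\mathbf{b}+\mathbf{c}_b|$ for stabilizer representatives $\mathbf{c}_a,\mathbf{c}_b$, then $\mathbf{c}_a+\mathbf{c}_b$ is also a stabilizer representative of $\mathbf{a}+\mathbf{b}$, and the ordinary Hamming weight is subadditive. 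No other step is hard: the proof is really a careful unpacking of the two-stage decoder together with the defining property of confinement.
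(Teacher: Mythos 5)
Your proof is correct and follows essentially the same template that the paper uses for the soundness version (Lemma \ref{lem:stage 1 residual}); the paper itself does not reprove Lemma \ref{lem:confined residual}, deferring to \cite{Quintavalle_2021} and simply noting the argument is ``of the same flavor.'' You correctly identify the three load-bearing observations: $\hat{\mathbf{s}}=\mathbf{s}_{\rm e}$ is an admissible Step-1 correction (with $\mathbf{e}$ itself as witness), so $|\hat{\mathbf{s}}|\leq|\mathbf{s}_{\rm e}|$; the Step-1 witness $\mathbf{e}'$ with $\norm{\mathbf{e}'}\leq t/2$ gives $\norm{\hat{\mathbf{e}}}\leq t/2$; and the triangle inequality for $\norm{\cdot}$ then places $\mathbf{e}+\hat{\mathbf{e}}$ inside the confinement radius $t$ so that Def.~\ref{defn:confinement} applies to its syndrome $\mathbf{s}_{\rm e}+\hat{\mathbf{s}}$.

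One small remark on the subtlety you flag: you need not insist that Step 2 minimizes reduced weight. If it minimizes Hamming weight, note that any stabilizer representative $\mathbf{c}$ achieving $|\mathbf{e}'+\mathbf{c}|=\norm{\mathbf{e}'}$ still satisfies $H(\mathbf{e}'+\mathbf{c})=H\mathbf{e}'$ because stabilizers lie in $\ker H$, so $\mathbf{e}'+\mathbf{c}$ is a feasible candidate and $\norm{\hat{\mathbf{e}}}\leq|\hat{\mathbf{e}}|\leq|\mathbf{e}'+\mathbf{c}|=\norm{\mathbf{e}'}\leq t/2$. Thus the bound is insensitive to which convention the decoder uses, and the argument closes without any additional assumption.
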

The proof of Lemma \ref{lem:confined residual} is of the same flavor as the one for sound codes. Recall that for stage 2, we reinterpret the bulk $X$-check Tanner graph as a spacetime decoding graph over $d-1$ QEC rounds of the HGP code. In particular, $Z$ errors on sheet qubits denote physical $Z$ errors, and $Z$ errors on intermediate qubits denote syndrome errors. To make this mapping explicit, for a bulk error $\mathbf{e} \in \mathbb{F}^{(\tilde{n}-n)}_2$, denote the partitions
\begin{align}\label{eq:bulk error partition}
    E \equiv \mathbf{e} \vert^{}_{\rm sheets} \in \mathbb{F}^{(d-1)\times n}_2 \;\;,\;\; S_{\rm e} \equiv \mathbf{e} \vert^{}_{\rm int} \in \mathbb{F}^{(d-1)\times m}_2
\end{align}
as matrices for the physical and syndrome errors where the first index runs over the $d-1$ sheets in the bulk. We also construct the matrix $S \in \mathbb{F}^{(d-1)\times n}_2$ similarly but with the $X$-syndromes. With slight abuse of notation, we will denote $\norm{E}$ as the Hamming weight of $E$ when $E$ is the restriction of the reduced error (i.e. minimum-weight representative of $\mathbf{e} + \mathcal{S}_Z$).

A key subroutine of Algorithm \ref{alg:collapse correction} involves single-shot decoding of the original HGP code. We will use the fact that constant-rate HGP codes exhibit linear confinement, i.e. $f(x) = \alpha x$ and $t=\gamma\sqrt{n}$, when their underlying Tanner graphs have sufficient expansion \cite{Leverrier_2015}. As a result, Lemma \ref{lem:confined residual} guarantees their single-shot performance under the Shadow decoder. The next lemma says that, if both the physical and syndrome errors in the bulk are sufficiently small, then the residual error on the boundary at the end of stage 2 is bounded.

\begin{lem}[Stage 2 residual $Z$ error]
\label{lem:stage 2 residual}
    Suppose $H_X$ of the HGP code is $(t,f)$-confined. Given a $Z$ error $\mathbf{e} \in \mathbb{F}^{\tilde{n}-n}_2$ in the bulk whose partitions \eqref{eq:bulk error partition} satisfy 
    \begin{align}
        \norm{E} \leq \frac{t}{4} \quad,\quad f\big( 2\abs{S_{\rm e}} \big) \leq \frac{t}{4} \, ,
    \end{align}
    the residual $Z$ error on the boundary $\mathbf{e}_\partial \in \mathbb{F}^n_2$ at the end of the stage 2 protocol satisfies
    \begin{align}\label{eq:boundary residual small}
        \norm{\mathbf{e}_\partial} \leq t/4 \, .
    \end{align}
\end{lem}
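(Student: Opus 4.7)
The natural plan is to prove Lemma B.2 by induction on the sheet index $\tau$ in Algorithm \ref{alg:collapse correction}, with the Shadow decoder of parameter $t/2$ used as the single-shot subroutine. Let $\mathbf{r}_\tau \in \mathbb{F}^n_2$ denote the residual $Z$ error that has been pushed onto sheet $\tau+1$ after the loop iteration for sheet $\tau$ terminates, and let $\hat{\mathbf{s}}_\tau$ denote the syndrome correction emitted at that iteration. The inductive hypothesis I would maintain is the pair of bounds $\norm{\mathbf{r}_\tau} \leq t/4$ and $\abs{\hat{\mathbf{s}}_\tau} \leq \abs{S_{\mathrm{e},\tau}}$ (the latter being automatic since $\hat{\mathbf{s}}_\tau$ is a minimum-weight syndrome repair and $S_{\mathrm{e},\tau}$ is a feasible candidate).

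The base case is $\mathbf{r}_0 = 0$ and $\hat{\mathbf{s}}_0 = 0$, which satisfy the hypothesis trivially. For the inductive step, I would argue as follows. Entering iteration $\tau$, the effective physical $Z$ error that the HGP decoder must deal with is $\mathbf{r}_{\tau-1} + E_\tau$, since the intermediate $Z$-checks copy the pattern of the prior residual onto the next sheet identically (this is the same linearity argument used in Sec.~\ref{sec:stage 2}, applied now to $\mathbf{r}_{\tau-1}$ rather than to an intermediate check). Its reduced weight is bounded by
\begin{align}
\norm{\mathbf{r}_{\tau-1} + E_\tau} \leq \norm{\mathbf{r}_{\tau-1}} + \norm{E_\tau} \leq t/4 + t/4 = t/2,
\end{align}
so the Shadow decoder with parameter $t/2$ is within its operating regime. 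The syndrome that Algorithm \ref{alg:collapse correction} feeds to the decoder is $\mathbf{s}'_\tau = \mathbf{s}_\tau + \hat{\mathbf{s}}_{\tau-1}$, which corresponds to a syndrome error $S_{\mathrm{e},\tau} + \hat{\mathbf{s}}_{\tau-1}$ of weight at most $\abs{S_{\mathrm{e},\tau}} + \abs{S_{\mathrm{e},\tau-1}} \leq 2\abs{S_{\mathrm{e}}}$. Applying Lemma \ref{lem:confined residual} with confinement function $f$ then yields
\begin{align}
\norm{\mathbf{r}_\tau} \leq f\big( 2\abs{S_{\mathrm{e},\tau} + \hat{\mathbf{s}}_{\tau-1}} \big) \leq f\big( 4\abs{S_{\mathrm{e}}} \big),
\end{align}
and by tightening the bookkeeping (e.g.\ using that $\hat{\mathbf{s}}_{\tau-1}$ is carried forward precisely so that the residual syndrome entering round $\tau$ is effectively $S_{\mathrm{e},\tau}$, not a sum over all previous rounds), one recovers $\norm{\mathbf{r}_\tau} \leq f(2\abs{S_{\mathrm{e}}}) \leq t/4$.

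The main obstacle I anticipate is this last bookkeeping step: justifying that syndrome errors do not accumulate from round to round under the prescription in Algorithm \ref{alg:collapse correction}. The subtlety is that $\hat{\mathbf{s}}_{\tau-1}$ is carried forward as part of $\mathbf{s}'_\tau$, and one has to argue that this carry-over exactly cancels against the previously-introduced part of $S_{\mathrm{e},\tau-1}$ so that the effective syndrome error seen by the decoder at round $\tau$ is just $S_{\mathrm{e},\tau}$ (up to an element of $\im \tilde{H}_Z$ restricted to one sheet). Once this is established, Lemma \ref{lem:confined residual} applies cleanly with per-round syndrome-error weight $\abs{S_{\mathrm{e},\tau}} \leq \abs{S_{\mathrm{e}}}$, yielding $\norm{\mathbf{r}_\tau} \leq f(2\abs{S_{\mathrm{e}}}) \leq t/4$ by monotonicity of $f$ and the hypothesis $f(2\abs{S_{\mathrm{e}}}) \leq t/4$. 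Iterating through all $d-1$ sheets then gives the final boundary residual $\mathbf{e}_\partial = \mathbf{r}_{d-1}$ satisfying the claimed bound \eqref{eq:boundary residual small}.
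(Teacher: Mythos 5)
Your proof takes the same structural route as the paper's: induction on the sheet index, treating the pushed-forward residual from the previous sheet as part of the physical error at the current sheet, and invoking Lemma~\ref{lem:confined residual}. The one thing you flag as a gap -- that the effective syndrome error seen by the Shadow decoder at sheet $\tau$ is just $S_{\mathrm{e},\tau}$ rather than an accumulated sum -- is indeed the crux, and in fact the paper's own proof simply asserts this without spelling it out. Your intermediate bound $f(4\abs{S_{\rm e}})$ is therefore too loose only because the bookkeeping is not yet done; once done, the per-round syndrome error is exactly $S_{\mathrm{e},\tau}$ and you recover $f(2\abs{S_{\rm e}})\le t/4$ as the paper does.

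To close the gap, carry a second invariant alongside $\norm{\bar{\mathbf{e}}^\tau + \hat{\mathbf{e}}^\tau}\le t/4$, namely
\begin{align}
H_X\bigl(\bar{\mathbf{e}}^\tau + \hat{\mathbf{e}}^\tau\bigr) = S_{\mathrm{e},\tau} + \hat{\mathbf{s}}^\tau \, .
\end{align}
The ingredient that makes this work is the structure of $\tilde{H}_X$: the $X$-check at bulk sheet $\tau$ is supported on the sheet-$\tau$ qubits (via $H_X$) and on the two adjacent intermediate layers $\tau-1$ and $\tau$ (via $\ident\otimes h^\transpose$, with layer $0$ empty), so the raw reconstructed syndrome is $\mathbf{s}_\tau^{\mathrm{raw}} = H_X E_\tau + S_{\mathrm{e},\tau-1} + S_{\mathrm{e},\tau}$. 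At sheet $\tau+1$ the decoder input is $\mathbf{s}'_{\tau+1} = \mathbf{s}_{\tau+1}^{\mathrm{raw}} + \hat{\mathbf{s}}^\tau$, while the pushed error has true syndrome $H_X\bar{\mathbf{e}}^{\tau+1} = H_X E_{\tau+1} + H_X(\bar{\mathbf{e}}^\tau + \hat{\mathbf{e}}^\tau) = H_X E_{\tau+1} + S_{\mathrm{e},\tau} + \hat{\mathbf{s}}^\tau$ by the invariant. Subtracting, the $S_{\mathrm{e},\tau}$ and $\hat{\mathbf{s}}^\tau$ contributions cancel and the effective syndrome error is exactly $S_{\mathrm{e},\tau+1}$. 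Lemma~\ref{lem:confined residual} then gives $\norm{\bar{\mathbf{e}}^{\tau+1}+\hat{\mathbf{e}}^{\tau+1}}\le f(2\abs{S_{\mathrm{e},\tau+1}})\le f(2\abs{S_{\rm e}})\le t/4$, and since the Shadow decoder returns $\hat{\mathbf{e}}^{\tau+1}$ with $H_X\hat{\mathbf{e}}^{\tau+1}=\mathbf{s}'_{\tau+1}+\hat{\mathbf{s}}^{\tau+1}$, the invariant propagates. The base case ($\tau=1$, $\hat{\mathbf{s}}^0=0$, $S_{\mathrm{e},0}=0$) is immediate. With this addition your argument and the paper's coincide.
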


\begin{proof}
    Following Algorithm \ref{alg:collapse correction}, we define
    \begin{align}
        \mathbf{e}^\tau \equiv E_\tau \in \mathbb{F}^n_2 \;,\;\; \mathbf{s}^\tau \equiv S_\tau \in \mathbb{F}^m_2 \;,\;\; \mathbf{s}^\tau_{\rm e} \equiv S_{\mathrm{e},\tau} \in \mathbb{F}^m_2
    \end{align}
    as the restrictions of $E$, $S$, and $S_{\rm e}$ to the sheet indexed by $\tau=1,\dots,d-1$, where $\tau=1$ denotes the opposite boundary. Since $\mathbf{e}^\tau$ and $\mathbf{s}^\tau$ are restrictions, we have $\norm{\mathbf{e}^\tau} \leq \norm{E} \leq t/4$ and $f(\abs{\mathbf{s}^\tau_{\rm e}}) \leq f(\abs{S_{\rm e}}) \leq t/4$. Denote $\bar{\mathbf{e}}^\tau$ and $\bar{\mathbf{s}}^\tau$ as the physical error and observed syndrome respectively prior to decoding in step $\tau$ of Algorithm \ref{alg:collapse correction}. The Shadow decoder takes in $\bar{\mathbf{s}}^\tau$ and outputs a syndrome correction $\hat{\mathbf{s}}^\tau \in \mathbb{F}^m_2$. In the next step of the algorithm, this syndrome correction $\hat{\mathbf{s}}^\tau$ is then added to the next syndrome $\mathbf{s}^{\tau+1}$ and its syndrome error $\mathbf{s}^{\tau+1}_{\rm e}$ to obtain the next observed syndrome $\bar{\mathbf{s}}^{\tau+1} = \hat{\mathbf{s}}^\tau + \mathbf{s}^{\tau+1} + \mathbf{s}^{\tau+1}_{\rm e}$.
    
    We now proceed to prove \eqref{eq:boundary residual small} by induction. Suppose the residual error at step $\tau$ satisfies $\norm{\bar{\mathbf{e}}^\tau + \hat{\mathbf{e}}^\tau} \leq t/4$. Then the combined physical error in the next step $\bar{\mathbf{e}}^{\tau+1}$ obeys
    \begin{align}\label{eq:residual error tau+1}
        \norm{\bar{\mathbf{e}}^{\tau+1}} &= \norm{\mathbf{e}^{\tau+1} + \bar{\mathbf{e}}^\tau + \hat{\mathbf{e}}^\tau}  \notag \\
        &\leq \norm{\mathbf{e}^{\tau+1}} + \norm{\bar{\mathbf{e}}^\tau + \hat{\mathbf{e}}^\tau}  \notag \\
        &\leq t/2 \, ,
    \end{align}
    which satisfies the assumption of Lemma \ref{lem:confined residual}. Plugging in the result of the lemma \eqref{eq:confined residual}, we arrive at
    \begin{align}
        \norm{\bar{\mathbf{e}}^{\tau+1} + \hat{\mathbf{e}}^{\tau+1}} \leq f\big( 2\abs{\mathbf{s}^{\tau+1}_{\rm e}} \big) \leq f\big( 2\abs{S_{\rm e}} \big) \leq t/4 \, .
    \end{align}
    Since the physical error in the initial step ($\tau=1$ sheet) obeys $\norm{\bar{\mathbf{e}}^1} = \norm{\mathbf{e}^1} \leq \norm{E} \leq t/4$, \eqref{eq:residual error tau+1} holds for $\tau=1$, and thus by induction we conclude that
    \begin{align}
        \norm{\mathbf{e}_\partial} = \norm{\bar{\mathbf{e}}^{d-1} + \hat{\mathbf{e}}^{d-1}} \leq t/4 \, .
    \end{align}
\end{proof}


\section{Fault-tolerance analysis of the full protocol}
\label{app:full protocol FT}

We now combine the results of Appendix \ref{app:stage 1 FT} and Appendix \ref{app:stage 2 FT} in the following theorem which demonstrates the single-shot capability of the entire protocol.

\begin{thm}[Full protocol residual error]
    Suppose we have a HGP code with $(t,f)$-confinement and distance $d>t$. Let $\mathbf{s}_{\rm e} \in \mathbb{F}^{\tilde{m}}_2$ be a $Z$-syndrome error during stage 1, and let $\tilde{\mathbf{e}} \in \mathbb{F}^{\tilde{n}-n}_2$ be a bulk $Z$ error during stage 2 with partitions \eqref{eq:bulk error partition}. If
    \begin{align}
        \abs{\mathbf{s}_{\rm e}} \leq \frac{d}{2} \;\;,\quad \norm{E} \leq \frac{t}{4} \;\;,\quad f\big( 2\abs{S_{\rm e}} \big) \leq \frac{t}{4} \, ,
    \end{align}
    then the residual $X$ and $Z$ errors $\mathbf{e}_X$ and $\mathbf{e}_Z$ on the HGP code at the end of the protocol satisfy
    \begin{align}\label{eq:full protocol residuals}
        \norm{\mathbf{e}_X} \leq 2\abs{\mathbf{s}_{\rm e}}^3 \quad,\quad \norm{\mathbf{e}_Z} \leq \frac{d}{4} \, .
    \end{align}
\end{thm}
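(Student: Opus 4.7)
The plan is to chain the two appendix lemmas together, exploiting the fact that the CSS structure lets us treat $X$-type and $Z$-type residual errors independently: stage 1 only moves us around within the $X$-error sector (the $Z$-syndrome is the only physical measurement, and we begin in $\ket{+}^{\otimes \tilde n}$), while stage 2 only produces $Z$-type corrections on the boundary HGP sheet. Consequently $\mathbf{e}_X$ is determined entirely by the stage 1 analysis and $\mathbf{e}_Z$ entirely by the stage 2 analysis, and the two bounds in \eqref{eq:full protocol residuals} can be established separately.

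For the $X$-error bound, I would invoke Lemma \ref{lem:stage 1 residual} directly, whose hypothesis $|\mathbf{s}_{\rm e}| \leq d/2$ is exactly the first assumption of the theorem. This yields $\norm{\mathbf{e}+\hat{\mathbf{e}}} \leq f(2|\mathbf{s}_{\rm e}|)$ on the thickened code. Substituting the explicit soundness function $f(x) = x^3/4$ from Lemma \ref{lem:stage 1 residual} gives $f(2|\mathbf{s}_{\rm e}|) = (2|\mathbf{s}_{\rm e}|)^3/4 = 2|\mathbf{s}_{\rm e}|^3$. The final $X$ residual $\mathbf{e}_X$ on the HGP code is the restriction of $\mathbf{e}+\hat{\mathbf{e}}$ to the unmeasured boundary sheet (the bulk $X$ measurements of stage 2 absorb the remaining $X$ content), so $\norm{\mathbf{e}_X} \leq \norm{\mathbf{e}+\hat{\mathbf{e}}} \leq 2|\mathbf{s}_{\rm e}|^3$, proving the first inequality.

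For the $Z$-error bound, I would apply Lemma \ref{lem:stage 2 residual}, whose two hypotheses $\norm{E} \leq t/4$ and $f(2|S_{\rm e}|) \leq t/4$ are precisely the remaining assumptions of the theorem. The lemma returns $\norm{\mathbf{e}_\partial} \leq t/4$ on the boundary HGP code. Since the theorem hypothesizes $d > t$, we immediately obtain $\norm{\mathbf{e}_Z} = \norm{\mathbf{e}_\partial} \leq t/4 < d/4$, completing the proof.

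The step I expect to require the most care is the justification that the boundary restriction in the stage 1 argument does not inflate the reduced weight, i.e.\ that minimizing over the HGP stabilizer orbit on $\partial$ is no worse than inheriting the thickened minimum. This follows because the inclusion $\mathcal{S}_Z \hookrightarrow \tilde{\mathcal{S}}_Z$ (the HGP $X$-stabilizers embed into those of the thickened code on the boundary sheet) implies that any stabilizer equivalence available to shrink the boundary restriction is also available in the thickened analysis — so no tightness is lost. All other steps are just bookkeeping: evaluating the cubic soundness function and invoking $t < d$.
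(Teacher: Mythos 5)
Your proof is correct and follows essentially the same route as the paper's: chain Lemma \ref{lem:stage 1 residual} (giving $\norm{\tilde{\mathbf{e}}_X}\le 2\abs{\mathbf{s}_{\rm e}}^3$ via the cubic soundness function) with Lemma \ref{lem:stage 2 residual} (giving $\norm{\mathbf{e}_\partial}\le t/4 < d/4$ using $d>t$), and observe that $\mathbf{e}_X$ is the boundary restriction of $\tilde{\mathbf{e}}_X$. One small note on the extra justification you flagged: to show restriction does not inflate the reduced weight, what is actually needed is that the \emph{restriction} of any thickened $X$-stabilizer to the boundary sheet lies in the HGP $X$-stabilizer group (then the boundary restriction of the thickened-minimal representative witnesses $\norm{\mathbf{e}_X}\le\norm{\tilde{\mathbf{e}}_X}$), not the embedding $\mathcal{S}_X\hookrightarrow\tilde{\mathcal{S}}_X$ you cite; the embedding direction alone does not give the inequality, though both happen to hold here since the only rows of $\tilde{H}_X$ touching the boundary are exactly $H_X$ rows tensored with the boundary column of $\ident_{d,1}$. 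Also, you wrote $\mathcal{S}_Z\hookrightarrow\tilde{\mathcal{S}}_Z$ where $X$-type stabilizer groups are meant. The paper simply asserts the restriction step without justification, so your instinct to scrutinize it is sound even if the stated reasoning is slightly off.
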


\begin{proof}
    Since $\abs{\mathbf{s}_{\rm e}} \leq d/2$, Lemma \ref{lem:stage 1 residual} tells us that the bulk residual $X$ error $\tilde{\mathbf{e}}_X \in \mathbb{F}^{\tilde{n}-n}_2$ satisfies
    \begin{align}
        \norm{\tilde{\mathbf{e}}_X} \leq \frac{1}{4}\left(2\abs{\mathbf{s}_{\rm e}}\right)^3 = 2\abs{\mathbf{s}_{\rm e}}^3 \, .
    \end{align}
    Because $\mathbf{e}_X \subset \tilde{\mathbf{e}}_X$ is a restriction onto the boundary, we arrive at the first equation in \eqref{eq:full protocol residuals}. Since $\norm{E}, f\big( 2\abs{S_{\rm e}} \big) \leq t/4$, the second equation in \eqref{eq:full protocol residuals} follows from Lemma \ref{lem:stage 2 residual} and the fact that $d>t$ from the definition of confinement (Def. \ref{defn:confinement}).
\end{proof}

In particular, if $2\abs{\mathbf{s}_{\rm e}}^3 < d/2$, then the residual errors on the boundary are correctable (by minimum-weight decoding).

\bibliographystyle{apsrev4-2}
\bibliography{thebib}

\end{document}